\definecolor{darkgreen}{rgb}{0.0,0,0.9}
\newcommand{\Input}{\item[{\bf Input:}]}
\newcommand{\Output}{\item[{\bf Output:}]}
\renewcommand{\Return}{\item[{\bf return}]}
\def\equationautorefname~#1\null{%
  equation~(#1)\null
}
\declaretheorem[numberwithin=section]{theorem}
\declaretheorem[sibling=theorem]{lemma}
\declaretheorem[sibling=theorem]{proposition}
\declaretheorem[sibling=theorem]{claim}
\declaretheorem[sibling=theorem]{corollary}
\declaretheorem[sibling=theorem]{fact}
\declaretheorem[sibling=theorem]{definition}
\declaretheorem[sibling=theorem]{example}
\def\setminus{-}
\newenvironment{proofof}[1]{{\medbreak\noindent \em Proof of #1.  }}{\hfill\qed\medbreak}
\def\phiin{\phi_{\text{in}}}
\def\phiout{\phi_{\text{out}}}
\def\bS{{\overline{S}}}
\def\b1{{\bf 1}}
\def\eps{{\epsilon}}
\def\R{\mathbb{R}}
\def\cR{\mathcal{R}}
\def\cL{{\cal L}}
\newcommand{\e}[2]{{w(#1 \to #2)}}
\DeclareMathOperator{\vol}{vol}
\DeclareMathOperator{\poly}{poly}
\DeclareMathOperator{\argmax}{argmax}
\DeclareMathOperator{\argmin}{argmin}
\DeclareMathOperator{\supp}{supp}
\def\inside{\varphi}
\begin{document}

\title{Partitioning into Expanders}
\author{Shayan Oveis Gharan\thanks{Computer Science Division, U.C. Berkeley. This material is  supported by a Stanford graduate fellowship and a Miller fellowship. Email:\protect\url{oveisgharan@berkeley.edu}. }
\and
Luca Trevisan\thanks{Department of Computer Science, Stanford University. This material is based upon  work supported by the National Science Foundation under grants No.  CCF 1017403 and CCF 1216642.
Email:\protect\url{trevisan@stanford.edu}.}
}

\maketitle

\begin{abstract}

Let $G=(V,E)$ be an undirected graph,  $\lambda_k$ be the $k^{th}$ smallest eigenvalue of the normalized laplacian matrix of $G$.
There is a basic fact in algebraic graph theory that $\lambda_k > 0$ if and only if $G$
has at most $k-1$ connected components. We prove a robust version of this fact. If $\lambda_k>0$, then for some $1\leq \ell\leq k-1$, $V$ can be {\em partitioned} into $\ell$ sets $P_1,\ldots,P_\ell$  such that each $P_i$  is a low-conductance set in $G$ and induces a high conductance induced subgraph.
In particular,  $\phi(P_i) \lesssim \ell^3\sqrt{\lambda_\ell}$ and  $\phi(G[P_i]) \gtrsim \lambda_k/k^2$. 

We  make our results algorithmic by designing a simple polynomial time spectral algorithm
to find such partitioning of $G$ with a quadratic loss in the inside conductance of $P_i$'s.
Unlike the recent results on higher order Cheeger's inequality \cite{LOT12,LRTV12},
our algorithmic results do not use higher order eigenfunctions of $G$.
In addition, if there is a sufficiently large gap between $\lambda_{k}$ and $\lambda_{k+1}$, more precisely, if $\lambda_{k+1} \gtrsim \poly(k) \lambda_{k}^{1/4}$ then our algorithm finds a $k$ partitioning of $V$ 
into sets $P_1,\ldots,P_{k}$ such that the induced subgraph $G[P_i]$ has a significantly
larger conductance than the conductance of $P_i$ in $G$. Such a partitioning may represent the best $k$ clustering of $G$.
Our algorithm is a simple local search that only uses the Spectral Partitioning algorithm
as a subroutine. 
We expect to see further applications of this simple algorithm in clustering applications.

Let $\rho(k)=\min_{\text{disjoint } A_1,\ldots, A_k} \max_{1\leq i\leq k} \phi(A_i)$   be the order $k$ conductance constant of $G$, in words, $\rho(k)$ is the smallest value of the maximum conductance of any $k$ disjoint subsets of $V$. 
Our main technical lemma shows that if $(1+\eps)\rho(k)<\rho(k+1)$, then $V$ can be  partitioned into $k$ sets $P_1,\ldots,P_k$ such that
 for each $1\leq i\leq k$, $\phi(G[P_i]) \gtrsim \eps\cdot \rho(k+1)/k$ and $\phi(P_i)\leq k\cdot \rho(k)$.
This significantly improves  a recent result of Tanaka \cite{Tan12} who assumed
an exponential (in $k$) gap between $\rho(k)$ and $\rho(k+1)$. 

\end{abstract}
\thispagestyle{empty}

\newpage
\setcounter{page}{1}

\section{Introduction}
Clustering is one of the fundamental primitives in machine learning and data analysis with a variety of applications 
in information retrieval,
pattern recognition, recommendation systems, etc.
Data clustering may be modeled as a graph partitioning problem,
where one  models each of the data points as a vertex of a graph and
the weight of an edge connecting two vertices represents the similarity of the corresponding data points. We assume the weight is larger if the points are more similar (see e.g. \cite{NJW02}). 
 Let $G=(V,E)$ be an undirected graph with $n:=|V|$ vertices. For all pair of vertices $u,v\in V$ let $w(u,v)\geq 0$ be the weight of the edge between $u$ and $v$ (we let $w(u,v)=0$ if there is no edge between $u$ and $v)$. 

There are several combinatorial measures for the quality of a $k$-way partitioning of a graph
including diameter, $k$-center, $k$-median, conductance, etc.
Kannan, Vempala and Vetta \cite{KVV04} show that several of these measures fail to capture
the natural clustering in simple examples.
Kleinberg \cite{Kle02} show that there is no unified clustering function satisfying three basic properties. 
Kannan et al.~\cite{KVV04} propose conductance as one of the best objective functions for measuring the quality of a cluster.

 For a subset $S\subseteq V$, let the {\em volume} of $S$ be $\vol(S):=\sum_{v\in S} w(v)$,
 where  $w(v) := \sum_{u\in V} w(v, u)$ is the weighted degree of a vertex $v\in V$.
The {\em conductance} of $S$ is defined as
 $$ \phi_G(S):=\frac{
 w(S,\bS)
 }{\vol(S)},$$
 where $\bS=V\setminus S$, and $w(S,\bS)=\sum_{u\in S,v\in\bS} w(u,v)$ is the sum
 of the weight of the edges in the cut $(S,\bS)$. The subscript $G$ in the above definition may be omitted. 
For example, if $\phi(S)=0.1$ it means that $0.9$ fraction 
of the neighbors of a random vertex  of $S$ (chosen proportional to degree) are inside $S$ in expectation. 
The conductance of $G$, $\phi(G)$ is the smallest conductance among all sets that have at most half of the total volume,
$$ \phi(G):=\min_{S: \vol(S)\leq \vol(V)/2} \phi(S).$$
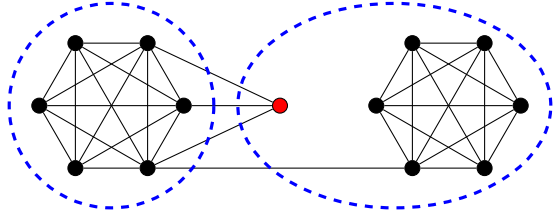
\begin{figure}
\begin{center}
\begin{tikzpicture}[inner sep=2pt,scale=.8,pre/.style={<-,shorten <=2pt,>=stealth,thick}, post/.style={->,shorten >=1pt,>=stealth,thick}]
\tikzstyle{every node} = [draw, circle,fill,black];
\foreach \a/\l in {0/a, 180/b}{
\foreach \b in {0, 60, 120, 180, 240, 300}{
\path (\a:2.8)+(\b:1.2) node [fill=black] (\l_\b){};
}
\foreach \b/\c in {0/60, 0/120, 0/180, 0/240, 0/300, 60/120, 60/180, 60/240, 60/300, 120/180,
120/240, 120/300, 180/240, 180/300, 240/300}{
\path  (\l_\b) edge  (\l_\c);
}
}
\path (0,0) node [fill=red] (o){};
\path (o) edge (b_0) edge (b_300) edge (b_60);
\path (a_240) edge (b_300);
\draw [color=blue,line width=1.2pt,dashed] (-2.8,0) circle (1.7);
\draw [color=blue,line width=1.2pt,dashed] (1.9,0) ellipse (2.6 and 1.7);
\end{tikzpicture}
\caption{In this example although both  sets in the 2-partitioning are of small conductance, 
in a natural clustering the red vertex (middle vertex) will be merged with the left cluster }
\label{fig:phiout}
\end{center}
\end{figure}

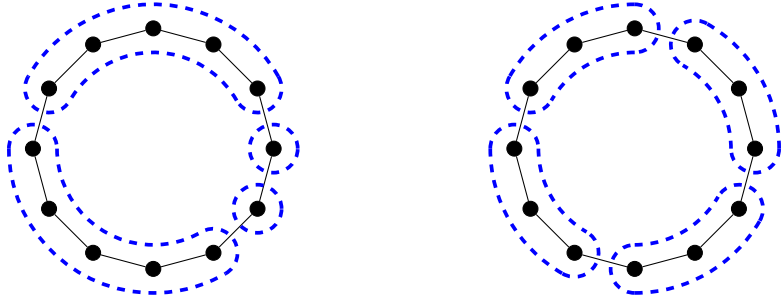
\begin{figure}
\begin{center}
\begin{tikzpicture}[inner sep=2pt,scale=.8,pre/.style={<-,shorten <=2pt,>=stealth,thick}, post/.style={->,shorten >=1pt,>=stealth,thick}]
\tikzstyle{every node} = [draw, circle,fill,black];
\foreach \d/\l in {0/a, 180/b}{
\foreach \a in {0,30, 60, 90, 120, 150, 180, 210, 240, 270, 300, 330}{
\path (\d:4)+(\a:2) node [fill=black] (\l_\a){};
}
\foreach \a/\b in {0/30,30/60, 60/90, 90/120, 120/150, 150/180, 180/210, 210/240, 240/270, 270/300, 300/330, 330/0}{
\path (\l_\a) edge (\l_\b);
}
}
\draw [color=blue,dashed,line width=1.4pt] (b_0) circle (0.4);
\draw [color=blue,dashed,line width=1.4pt] (b_330) circle (0.4);
\foreach \a in {0, 150}{
\begin{scope}[shift={(-4,0)}, rotate=\a]
\draw [color=blue,dashed,line width=1.4pt] (30:2)+(30:0.4) arc (30:150:2.4);
\draw [color=blue,dashed,line width=1.4pt] (30:2)+(30:-0.4) arc (30:150:1.6);
\draw [color=blue,dashed,line width=1.4pt] (30:2)+(30:0.4) arc (30:-150:0.4);
\draw [color=blue,dashed,line width=1.4pt] (150:2)+(150:0.4) arc (150:330:0.4);
\end{scope}
}

\foreach \a in {0, 90, 180, 270}{
\begin{scope}[shift={(4,0)}, rotate=\a]
\draw [color=blue,dashed,line width=1.4pt] (0:2)+(0:0.4) arc (0:60:2.4);
\draw [color=blue,dashed,line width=1.4pt] (0:2)+(0:-0.4) arc (0:60:1.6);
\draw [color=blue,dashed,line width=1.4pt] (0:2)+(0:0.4) arc (0:-180:0.4);
\draw [color=blue,dashed,line width=1.4pt] (60:2)+(60:0.4) arc (60:240:0.4);
\end{scope}
}
\end{tikzpicture}
\caption{Two $4$-partitioning of the cycle graph. In both of the partitionings the number of edges
between the clusters are exactly 4, and the inside conductance of all components is at least 1/2 in both cases. But, the right clustering is a more natural clustering of cycle. }
\label{fig:phiin}
\end{center}
\end{figure}

One approach for constructing a $k$-clustering of $G$ is to find $k$ sets of small conductance.
Shi and Malik \cite{SM00} show that this method provides high quality solutions in image segmentation applications. Recently, Lee et al.~\cite{LOT12} and Louis et al.~\cite{LRTV12}
designed spectral algorithms for finding a $k$-way partitioning where every set
has a small conductance. 
It turns out that in many graphs just the fact that a set $S$ has a small conductance
is not enough to argue that it is a good cluster; this is because although $\phi(S)$ is small, $S$ can be loosely-connected or even disconnected inside (see  \autoref{fig:phiout}). 

Kannan, Vempala and Vetta \cite{KVV04} proposed a bicriteria measure,
where they measure the quality of a $k$-clustering based on the {\em inside} conductance of sets
and the number of edges between the clusters. 
For $P\subseteq V$ let $\phi(G[P])$ be the {\em inside conductance} of $P$, i.e., the conductance
of the induced subgraph of $G$ on the vertices of $P$.
Kannan et al.~\cite{KVV04} suggested that a $k$-partitioning into $P_1,\ldots,P_k$ is good if $\phi(G[P_i])$ is large, and $\sum_{i\neq j} w(P_i,P_j)$ is small.
It turns out that an approximate solution for this objective function can be very different than the ``correct'' $k$-partitioning.  Consider a 4-partitioning of a cycle as we illustrate in \autoref{fig:phiin}.
Although the inside conductance of every set in the left partitioning is within a factor 2 of the right partitioning, the left partitioning does not provide the ``correct'' 4-partitioning of a cycle. 


In this paper we propose a third objective which uses both of the inside/outside conductance of the clusters. 
Roughly speaking, $S\subseteq V$ represents a good cluster
when $\phi(S)$ is small, but $\phi(G[S])$ is large. 
In other words, although $S$ doesn't expand in $G$, the induced subgraph $G[S]$ is an expander. 
\begin{definition}
\label{def:clustering}
We say $k$ disjoint subsets  $A_1,\ldots,A_k$ of $V$ are a $(\phiin,\phiout)$-clustering,
if for all $1\leq i\leq k$, 
$$\phi(G[A_i])\geq \phiin  \ \ \text{ and } \ \ \phi_G(A_i) \leq \phiout.$$
\end{definition}
One of the main contributions of the paper is to study graphs that contain a $k$-partitioning such that $\phiin \gg \phiout$. To the best of our knowledge, the only theoretical result that guarantees a $(\phiin,\phiout)$ partitioning of $G$ is a recent result of Tanaka \cite{Tan12}. 
For any $k\geq 2$, let $\rho(k)$ be the maximum conductance of any $k$ disjoint subsets of $G$,
$$ \rho(k) := \min_{\text{disjoint } A_1,\ldots,A_k} \max_{1\leq i\leq k} \phi(A_i).$$
For example, observe that  $\rho(2) = \phi(G).$
Tanaka \cite{Tan12} proved that if there is a large enough gap between $\rho(k)$
and $\rho(k+1)$ then $G$ has a $k$-partitioning that is a $(\exp(k)\rho(k),\rho(k+1)/\exp(k))$-clustering. 
\begin{theorem}[Tanaka \cite{Tan12}]
\label{thm:tanaka}
If $\rho_G(k+1) > 3^{k+1}\rho_G(k)$ for some $k$, then $G$ has a $k$-partitioning that is a $(\rho(k+1)/3^{k+1},3^k \rho(k))$-clustering.
\end{theorem}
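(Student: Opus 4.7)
The plan is to prove the theorem by iteratively repairing an initial $k$-partition, using the gap $\rho(k+1)>3^{k+1}\rho(k)$ to control both conductances simultaneously. I start with the optimal $k$-partition $(P_1,\ldots,P_k)$ satisfying $\max_i\phi_G(P_i)=\rho(k)$, and maintain throughout the invariant $\max_i\phi_G(P_i)\le 3^k\rho(k)$ while progressively driving every inside conductance above $\rho(k+1)/3^{k+1}$.

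The main loop runs as follows. If every piece already has $\phi(G[P_i])\ge\rho(k+1)/3^{k+1}$, we are done. Otherwise, pick a piece $P_i$ witnessing $\phi(G[P_i])<\rho(k+1)/3^{k+1}$ together with a cut $S\subseteq P_i$ realizing this bound (chosen as the smaller side in $G[P_i]$). Splitting $P_i$ along $S$ produces a $(k+1)$-partition whose maximum outside conductance is at least $\rho(k+1)$ by the very definition of $\rho(k+1)$. A standard cut identity applied to $P_i\setminus S$ yields
\[
\phi_G(P_i\setminus S)\;\le\; 3\max\!\bigl(\phi_G(P_i),\,\phi(G[P_i])\bigr)\;\le\; 3\cdot 3^k\rho(k)\;<\;\rho(k+1),
\]
and the unchanged pieces have conductance at most $3^k\rho(k)<\rho(k+1)$ by the invariant and the gap. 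Hence the piece with conductance $\ge \rho(k+1)$ in the $(k+1)$-partition must be $S$ itself. Combining $\phi_G(S)\ge\rho(k+1)$ with $\phi_{G[P_i]}(S)<\rho(k+1)/3^{k+1}$ forces almost all of the weighted boundary of $S$ to leave $P_i$. I then reassign $S$ to the piece $P_j$ receiving the largest share of that boundary, setting $P_i\leftarrow P_i\setminus S$ and $P_j\leftarrow P_j\cup S$. A second cut computation, exploiting that many previously outside edges now become internal to $P_j\cup S$, preserves the invariant $\max_i\phi_G(P_i)\le 3^k\rho(k)$.

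For termination, I use a potential function such as the lexicographic pair $\bigl(|\{i:\phi(G[P_i])<\rho(k+1)/3^{k+1}\}|,\ \min_i\vol_G(P_i)\bigr)$. Each iteration strictly decreases it, because $P_i\setminus S$ cannot contain an inside cut of smaller $G$-volume than $S$ had, and $P_j\cup S$ inherits $P_j$'s inside conductance up to a constant factor since the merger happens along edges whose total weight is a constant fraction of $\vol(S)$. After finitely many rounds the loop exits, yielding the claimed $\bigl(\rho(k+1)/3^{k+1},\,3^k\rho(k)\bigr)$-clustering.

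The main obstacle is the analysis of the merge step: showing that $\phi(G[P_j\cup S])$ is not much smaller than $\phi(G[P_j])$ requires proving that the edges linking $S$ to $P_j$ are a constant fraction of $\vol(S)$, which is exactly where the large gap $\rho(k+1)>3^{k+1}\rho(k)$ must be spent. The accompanying termination argument — ruling out cyclic reassignments and verifying monotone descent of the potential — is the other delicate point, and is what forces the exponential constants $3^k$ and $3^{k+1}$ in the statement.
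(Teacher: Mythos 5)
First, a point of orientation: the paper does not prove this statement at all --- it is quoted from Tanaka \cite{Tan12} --- so the only question is whether your plan would itself yield a proof. As written it has several genuine gaps, and they sit exactly at the places your last paragraph defers. (1) Your starting point is already unavailable: $\rho(k)$ is defined over $k$ \emph{disjoint} sets, not over $k$-partitions, and no $k$-partitioning with $\max_i\phi_G(P_i)=\rho(k)$ need exist. The paper's second example in \autoref{subsec:tightnessexpanders} (the $k+1$ cliques hanging off $C_0$) shows every $k$-partitioning contains a set of conductance $\Omega(k\cdot\rho(k))$, so the invariant $\max_i\phi_G(P_i)\le 3^k\rho(k)$ cannot be seeded the way you describe; you would have to start from disjoint sets and immediately face the merging problem you postpone. (2) The merge step is asserted, not proved, and the assertion is false as stated. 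Choosing the $P_j$ that receives the largest share of $S$'s boundary only guarantees that roughly a $\tfrac{1}{k-1}$ fraction of $w(S,V\setminus S)$ becomes internal --- not ``a constant fraction of $\vol(S)$.'' Since at that moment $\phi_G(S)\ge\rho(k+1)$, the set $S$ can carry an enormous boundary, and if $\vol(S)$ is comparable to $\vol(P_j)$ then $\phi_G(P_j\cup S)$ can be of order $\rho(k+1)\gg 3^k\rho(k)$, destroying the invariant. Likewise $\phi(G[P_j\cup S])$ is not ``inherited up to a constant'': a subset of $P_j\cup S$ that mixes part of $S$ with part of $P_j$ can be a sparse cut of the union even when $G[P_j]$ expands and $S$ is well attached on aggregate. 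Controlling \emph{every} such straddling subset is precisely what the paper's machinery is built for: the backbone sets of \autoref{lem:findTi} satisfy $\inside(S,B_i)\ge\eps/3$ for all subsets, and \autoref{lem:Piexpander} (together with \autoref{lem:condPi} and the all-subsets guarantee of \autoref{fact:algprop}) is what converts that into inside expansion of the merged pieces. A one-shot reassignment of a single sparse cut $S$ gives no analogue of this ``for every subset'' property. (3) Termination is also unestablished: moving $S$ into $P_j$ can create a new bad piece, the minimum volume can go up or down, and the claim that $P_i\setminus S$ contains no inside cut of smaller volume than $S$ is unsupported; the paper's algorithms instead use monotone potentials ($|B_i|$ decreases in \autoref{alg:findTi}; the total weight of edges between parts decreases in \autoref{alg:expanderpartition}).

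In short, the high-level shape (split a non-expanding piece, use the $\rho(k)$ versus $\rho(k+1)$ gap to identify which side is the ``outlier,'' then reassign) is in the right spirit, but the three steps you flag as ``delicate'' --- preserving the outside-conductance invariant under merging, lower-bounding the inside conductance of the merged pieces, and termination --- are the entire mathematical content, and none of them goes through in the form you sketch. If you want a complete argument along these lines, the route the paper takes for its stronger \autoref{thm:kexpander} is the natural repair: first refine disjoint low-conductance sets into backbones $B_i$ with $\inside(\cdot,B_i)$ large for \emph{all} subsets (\autoref{lem:findTi}, \autoref{lem:minST}), then merge the leftover vertices so that every subset of $P_i\setminus B_i$ keeps a $1/k$ fraction of its boundary inside $P_i$ (\autoref{alg:expanderpartition}), and only then deduce the two conductance bounds (\autoref{lem:condPi}, \autoref{lem:Tiexpander}, \autoref{lem:Piexpander}); this yields Tanaka's conclusion with far better constants than $3^k$.
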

Unfortunately, Tanaka requires a very large gap (exponential in $k$) between $\rho(k)$ and $\rho(k+1)$. Furthermore, the above result is not algorithmic, in the sense that he needs to find the optimum sparsest cut of $G$ or its induced subgraphs to construct the $k$-partitioning.



\subsection{Related Works}
Let $A$ be the adjacency matrix of $G$ and $\cL:=I-D^{-1/2}AD^{-1/2}$ be the normalized laplacian of $G$ with eigenvalues $0=\lambda_1\leq \lambda_2\leq \ldots\lambda_n\leq 2$.
Cheeger's inequality offers the following quantitative connection between $\rho(2)$ and $\lambda_2$:
\begin{theorem}[Cheeger's inequality]
\label{thm:cheegerinequality}
For any graph $G$, 
$$ \frac{\lambda_2}{2} \leq \phi(G)\leq \sqrt{2\lambda_2}.$$
Furthermore, there is a simple near-linear time algorithm (the Spectral Partitioning algorithm) that finds a set $S$ such that
$\vol(S)\leq \vol(V)/2$, and $\phi(S)\leq \sqrt{4\phi(G)}$.
\end{theorem}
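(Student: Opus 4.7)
The plan is to prove the two inequalities separately, and then extract the algorithmic statement from the proof of the upper bound.

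For the easy direction $\lambda_2/2 \leq \phi(G)$, I would use the variational characterization of $\lambda_2$ as the minimum of the Rayleigh quotient over all vectors orthogonal (in the $D^{1/2}$-inner product sense) to the trivial eigenvector $D^{1/2}\mathbf{1}$. Given any $S$ with $\vol(S)\leq\vol(V)/2$, I would plug in the natural test vector $g=\mathbf{1}_S/\vol(S)-\mathbf{1}_{\bS}/\vol(\bS)$, which by construction satisfies $\sum_v d(v)g(v)=0$. A direct computation gives
\[
\frac{g^T L g}{g^T D g} \;=\; w(S,\bS)\left(\tfrac{1}{\vol(S)}+\tfrac{1}{\vol(\bS)}\right) \;\leq\; 2\phi(S),
\]
and this Rayleigh quotient (in the Laplacian/degree form) equals that of $D^{1/2}g$ with respect to $\cL$. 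Taking the infimum over $S$ yields $\lambda_2 \leq 2\phi(G)$.

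The hard direction and the algorithmic bound I would obtain simultaneously via the sweep-cut analysis. Let $\tilde{f}$ be an eigenvector of $\cL$ with eigenvalue $\lambda_2$, and set $f = D^{-1/2}\tilde{f}$. Shifting $f$ by a scalar (which can only decrease the Rayleigh quotient by the orthogonality condition being already captured in the best constant shift) ensures that both $\{f>0\}$ and $\{f<0\}$ have volume at most $\vol(V)/2$. Replacing $f$ by whichever of its positive or negative parts has smaller Rayleigh quotient gives a nonnegative vector $g$ supported on a set of volume at most $\vol(V)/2$, with $g^T L g / g^T D g \leq \lambda_2$. The Spectral Partitioning algorithm then sorts vertices by $g$-value and outputs the best threshold set $S_t = \{v : g(v)^2 \geq t\}$.

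To analyze the sweep cut, I would choose $t$ uniformly at random in $[0,\max_v g(v)^2]$ and compute
\[
\E{w(S_t,\bS_t)} \;=\; \sum_{u\sim v} w(u,v)\,|g(u)^2-g(v)^2|, \qquad \E{\vol(S_t)} \;=\; \sum_v d(v) g(v)^2.
\]
Factoring $|g(u)^2-g(v)^2| = |g(u)-g(v)|\cdot (g(u)+g(v))$ and applying Cauchy--Schwarz to the sum over edges, with one factor being $|g(u)-g(v)|$ and the other $g(u)+g(v)$, bounds the expected boundary by $\sqrt{2\,(g^T L g)\,(g^T D g)}$. Dividing by $\E{\vol(S_t)}=g^T D g$ yields a threshold $t^*$ with $\phi(S_{t^*})\leq \sqrt{2\lambda_2}$, proving both the inequality $\phi(G)\leq \sqrt{2\lambda_2}$ and the algorithmic guarantee. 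The running time is near-linear because an approximate second eigenvector can be computed by $O(\log n)$-iteration power/Lanczos methods on $\cL$, and the sweep itself is $O(n\log n)$.

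The main obstacle is the Cauchy--Schwarz step in the sweep-cut analysis: one must be careful to split $|g(u)^2-g(v)^2|$ in precisely the way that lets the two resulting sums match $g^T L g$ and $g^T D g$ respectively, including the handling of the factor of $2$ from the identity $(g(u)+g(v))^2 \leq 2(g(u)^2+g(v)^2)$. Everything else (the easy direction, the normalization of $f$, and the algorithmic runtime) is routine.
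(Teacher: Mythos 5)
The paper never proves this statement: it is imported as classical background (Cheeger's inequality plus the standard sweep-cut guarantee of Spectral Partitioning), so there is no internal proof to compare against. Your sketch is the standard textbook argument and is essentially correct: the test vector $\mathbf{1}_S/\vol(S)-\mathbf{1}_{\bS}/\vol(\bS)$ for $\lambda_2\le 2\phi(G)$; and for the other direction the degree-median shift (which cannot increase the Rayleigh quotient because $f$ is $D$-orthogonal to constants), restriction to whichever of the positive/negative parts has smaller quotient (using that the numerator splits superadditively while the denominators add), a random threshold on $g^2$, and Cauchy--Schwarz with $(g(u)+g(v))^2\le 2\bigl(g(u)^2+g(v)^2\bigr)$. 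Two points are worth making explicit rather than leaving implicit: (i) the algorithmic bound as stated is $\phi(S)\le\sqrt{4\phi(G)}$, which you obtain only by chaining the sweep bound $\sqrt{2\lambda_2}$ with the easy direction $\lambda_2\le 2\phi(G)$, and the set produced has $\vol(S)\le\vol(V)/2$ precisely because every nonempty threshold set lies inside $\supp(g)$, whose volume is at most half by the median shift; (ii) the near-linear running time requires an \emph{approximate} second eigenvector (exact eigenvectors are not computable in near-linear time), and the standard power/Lanczos analysis needs the component along $D^{1/2}\mathbf{1}$ projected away and yields a vector with Rayleigh quotient only within a constant (or $1+\eps$) factor of $\lambda_2$, which mildly degrades the constant in the guarantee --- this is why different references state the algorithmic bound with slightly different constants. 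Neither point is a gap in the approach; both are routine to patch.
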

The above inequality can be read as follows: a graph $G$ is nearly disconnected if and only if $\lambda_2$ is very close to zero. 
The importance of Cheeger's inequality is that  it does not depend on the size of the graph
$G$, and so it is applicable to massive graphs appearing in practical applications. 

Very recently, Lee et al.~\cite{LOT12} proved 
higher order variants of Cheeger's inequality (see also \cite{LRTV12}). 
In particular, they show that for any graph $G$, $\rho(k)$ very well characterizes $\lambda_k$.
\begin{theorem}[Lee et al.~\cite{LOT12}]
\label{thm:higherordercheeger}
For any graph $G$ and $k\geq 2$,
$$ \lambda_k/2\leq \rho(k) \leq O(k^2)\sqrt{\lambda_k}.$$
\end{theorem}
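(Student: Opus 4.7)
The plan has two parts, one for each direction of the inequality.

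For the lower bound $\lambda_k/2 \leq \rho(k)$, I would appeal to the Courant--Fischer variational characterization of $\lambda_k$. Let disjoint $A_1,\ldots,A_k$ achieve $\rho(k)$, and set $\psi_i := D^{1/2}\chi_{A_i}/\sqrt{\vol(A_i)}$. These $k$ vectors are orthonormal in $\R^V$ and span a $k$-dimensional subspace $W$. A direct expansion shows that for any unit vector $g = \sum_i d_i\psi_i \in W$, writing $c_i := d_i/\sqrt{\vol(A_i)}$ and $B := V\setminus\bigcup_i A_i$,
$$\langle g, \cL g\rangle \;=\; \sum_i c_i^2\, w(A_i, B) \;+\; \sum_{i<j}(c_i-c_j)^2\, w(A_i,A_j) \;\leq\; 2\sum_i c_i^2\, w(A_i,\bar A_i) \;=\; 2\sum_i d_i^2\,\phi(A_i) \;\leq\; 2\rho(k),$$
where I use $(c_i-c_j)^2 \leq 2(c_i^2+c_j^2)$ on the cross terms. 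By min-max, $\lambda_k \leq 2\rho(k)$.

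For the upper bound $\rho(k) \leq O(k^2)\sqrt{\lambda_k}$, I would follow a spectral-embedding plus localization strategy. Let $f_1,\ldots,f_k$ be an orthonormal basis of eigenfunctions of $\cL$ for the first $k$ eigenvalues, and form the embedding $F : V \to \R^k$ by $F(v) = (f_1(v),\ldots,f_k(v))/\sqrt{d_v}$. Then $\sum_v d_v\|F(v)\|^2 = k$ and the total Dirichlet energy of the coordinates is $\sum_{i\leq k}\lambda_i \leq k\lambda_k$. The plan is to extract from $F$ a family of $k$ \emph{disjointly supported} test functions $\eta_1,\ldots,\eta_k$ on $V$ whose Rayleigh quotients are each at most $\poly(k)\cdot \lambda_k$. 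Applying Cheeger's inequality (\autoref{thm:cheegerinequality}) to the sweep of each $\eta_i$ yields a set $A_i \subseteq \supp(\eta_i)$ with $\phi(A_i) \leq O(k^2)\sqrt{\lambda_k}$; since the supports are disjoint the $A_i$ are disjoint, giving $\rho(k) \leq O(k^2)\sqrt{\lambda_k}$.

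The main obstacle is the construction of $\eta_1,\ldots,\eta_k$. The naive choice of restricting $F$ to a hard partition of $V$ does not give small Rayleigh quotients, because the sharp boundary of each piece has edges with large $\|F(u)-F(v)\|$. The right move is a probabilistic partition: either apply a Gaussian projection of $F$ followed by threshold cuts, or perform a low-diameter random decomposition in the spectral metric, and then smooth the characteristic function of each region by a Lipschitz bump so that the boundary energy is small. The key quantitative step is to balance the mass surviving in each region (at least $\Omega(1/k)$ of the $D$-weighted $\ell^2$ norm) against the $1/r^2$ energy cost of the cutoff, and then accumulate the losses over all $k$ regions. Making this balance tight enough to obtain the $O(k^2)$ loss is the technically delicate ingredient; everything else is routine manipulation of Dirichlet forms and the Cheeger sweep.
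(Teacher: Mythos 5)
The paper itself offers no proof of this statement: it is quoted verbatim from \cite{LOT12} and used as a black box, so there is no internal argument to compare against; what can be judged is whether your two directions would stand on their own. Your lower bound does. The subspace spanned by the vectors $D^{1/2}\chi_{A_i}/\sqrt{\vol(A_i)}$ is indeed $k$-dimensional and orthonormal, the quadratic form decomposes exactly as you write (edges inside each $A_i$ contribute nothing, edges to $B$ contribute $c_i^2 w(A_i,B)$, cross edges contribute $(c_i-c_j)^2 w(A_i,A_j)$), and $(c_i-c_j)^2\leq 2(c_i^2+c_j^2)$ gives $\langle g,\cL g\rangle \leq 2\sum_i d_i^2\phi(A_i)\leq 2\rho(k)$, so Courant--Fischer yields $\lambda_k\leq 2\rho(k)$. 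This is the standard argument; equivalently, it is what one obtains by feeding the indicator functions $\chi_{A_1},\ldots,\chi_{A_k}$ (which satisfy $\cR(\chi_{A_i})=\phi(A_i)$) into \autoref{prop:lambdakapp}.

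The upper bound, by contrast, is an outline rather than a proof, and the part you defer is the theorem. Producing $k$ disjointly supported functions from the spectral embedding with Rayleigh quotient $\poly(k)\cdot\lambda_k$ --- via Gaussian projection or low-diameter random decomposition in the spectral metric, followed by Lipschitz smoothing, with the mass-versus-energy balance you describe --- is exactly the main technical content of \cite{LOT12}; saying it is ``technically delicate'' and stopping there re-cites the result rather than proving it. There is also a quantitative gap in the route you sketch: if you only use a localization statement of the strength the paper quotes (\autoref{thm:disjointsuppfunctions}, Rayleigh quotients at most $c_0k^6\lambda_k$), the Cheeger sweep on each $\eta_i$ gives sets with $\phi(A_i)\leq O(k^3)\sqrt{\lambda_k}$, which proves only $\rho(k)\leq O(k^3)\sqrt{\lambda_k}$. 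To reach the stated $O(k^2)\sqrt{\lambda_k}$ you need localized functions with Rayleigh quotient $O(k^4)\lambda_k$ (or the sharper direct argument of \cite{LOT12}), so the ``balance'' step you leave open is not routine bookkeeping but precisely where the exponent of $k$ is decided. The sweep step itself is fine: for a function supported on a set, the standard Cheeger rounding yields a subset of its support whose conductance in $G$ is at most the square root of twice its Rayleigh quotient, and disjoint supports give disjoint sets.
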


Meka, Moitra and  Srivastava \cite{MMS13} studied existence of $\Theta(k)$ expander graphs covering most vertices of a graph where the conductance of each expander is a function of $\lambda_k$. 

Kannan, Vempala and Vetta in \cite{KVV04} 
designed an approximation algorithm to find a partitioning of a graph that cuts very few edges
and each set in the partitioning has a large inside conductance. Comparing to \autoref{def:clustering} instead of minimizing $\phi(A_i)$ for each set $A_i$ they minimize 
$\sum_i \phi(A_i)$. 
Very recently, 
Zhu, Lattanzi and Mirrokni \cite{ZLM13} designed a {\em local algorithm} to find a set $S$ such that $\phi(S)$
is small and $\phi(G[S])$ is large assuming that such a set exists. Both of these results do not
argue about the existence of  a partitioning with large inside conductance. Furthermore, unlike Cheeger type inequalities the quality of approximation factor of these
algorithms depends on the size of the input graph (or the size of the cluster $S$).

\subsection{Our Contributions}
\paragraph{Partitioning into Expanders}
There is a basic fact in algebraic graph theory that for any graph $G$ and any $k\geq 2$,
$\lambda_k > 0$ if and only if $G$ has at most $k-1$ connected components.
It is a natural question to ask for a  robust version of this fact. Our main existential  theorem
 provides a robust version of this fact. 

\begin{theorem}
\label{thm:higherorderlambdak}
For any $k\geq 2$ if $\lambda_k>0$, then for some $1\leq \ell \leq k-1$
there is a $\ell$-partitioning of $V$ into sets $P_1,\ldots,P_\ell$ that is a $(\Omega(\rho(k)/k^2), O(\ell\rho(\ell))) = (\Omega(\lambda_k/k^2), O(\ell^3)\sqrt{\lambda_\ell})$ clustering. 
\end{theorem}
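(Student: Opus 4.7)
The plan is to deduce Theorem~\ref{thm:higherorderlambdak} from the paper's main technical lemma (stated in the abstract): if $(1+\eps)\rho(\ell) < \rho(\ell+1)$, then $V$ admits an $\ell$-partitioning $P_1,\ldots,P_\ell$ with $\phi(G[P_i]) \gtrsim \eps\,\rho(\ell+1)/\ell$ and $\phi(P_i) \leq \ell\,\rho(\ell)$. My task is to locate an index $\ell \in \{1,\ldots,k-1\}$ together with a gap $\eps$ for which the inside-conductance guarantee becomes $\Omega(\rho(k)/k^2)$; the outside bound $O(\ell\,\rho(\ell))$ is then automatic, and the equivalent $\lambda_k$-form follows from the two-sided higher-order Cheeger estimates $\lambda_k/2 \leq \rho(k) \leq O(k^2)\sqrt{\lambda_k}$ of Theorem~\ref{thm:higherordercheeger}.

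The right choice of $\ell$ comes from an \emph{additive} pigeonhole. First I observe two elementary facts: $\rho$ is non-decreasing (restricting an $(\ell+1)$-partition to any $\ell$ of its parts yields an $\ell$-partition of no larger max-conductance), and $\rho(1) = 0$ under the natural convention $P_1 = V$. Since $\lambda_k > 0$ gives $\rho(k) \geq \lambda_k/2 > 0$, the telescoping identity
\[
\sum_{\ell=1}^{k-1}\bigl(\rho(\ell+1)-\rho(\ell)\bigr) \;=\; \rho(k)-\rho(1) \;=\; \rho(k)
\]
has $k-1$ nonnegative terms averaging $\rho(k)/(k-1)$, so some index $\ell^* \in \{1,\ldots,k-1\}$ satisfies $\rho(\ell^*+1) - \rho(\ell^*) \geq \rho(k)/(k-1)$. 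Applying the technical lemma at this $\ell^*$ with $\eps = \rho(\ell^*+1)/\rho(\ell^*) - 1$ (so that $\eps\,\rho(\ell^*) = \rho(\ell^*+1)-\rho(\ell^*)$), and using $\rho(\ell^*+1)/\rho(\ell^*) \geq 1$, I obtain
\[
\phi(G[P_i]) \;\gtrsim\; \frac{\eps\,\rho(\ell^*+1)}{\ell^*} \;=\; \frac{\rho(\ell^*+1)-\rho(\ell^*)}{\ell^*}\cdot\frac{\rho(\ell^*+1)}{\rho(\ell^*)} \;\geq\; \frac{\rho(k)}{(k-1)\,\ell^*} \;\geq\; \frac{\rho(k)}{k^2},
\]
while $\phi(P_i) \leq \ell^*\rho(\ell^*) = O(\ell\,\rho(\ell))$ holds directly from the lemma.

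The main obstacle I anticipate, and the conceptual crux of the argument, is that the pigeonhole \emph{must} be additive: a multiplicative or geometric-mean pigeonhole would only guarantee a ratio gap of order $\bigl(\rho(k)/\rho(2)\bigr)^{1/(k-1)}$, which degrades exponentially in $k$ and yields a useless inside-conductance estimate. The additive version loses only two factors of $k$ (one from averaging, one from $\ell^* \leq k-1$), landing exactly at the target $\rho(k)/k^2$. The only other delicate point is the boundary regime $\rho(\ell^*)=0$, which by the algebraic characterization of $\lambda_k$ occurs only when $G$ has at least $\ell^*$ connected components; the pigeonhole automatically avoids this (a zero summand cannot be maximal when the sum is positive) except at $\ell^*=1$, where one simply reads the lemma as producing the trivial partition $P_1=V$, for which $\phi(G[V])=\phi(G)=\rho(2)\geq \rho(k)/(k-1)$ matches the same bound and $\phi(V)=0$ trivially satisfies the outside constraint.
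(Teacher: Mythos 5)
Your overall route is sound and genuinely different in its selection rule from the paper's. The paper also reduces \autoref{thm:higherorderlambdak} to part (ii) of \autoref{thm:kexpander}, but it fixes $\eps=1/k$ in advance and uses a \emph{multiplicative} pigeonhole: since $\rho(1)=0<\rho(k)$, some index satisfies $(1+1/k)\rho(\ell)<\rho(\ell+1)$; taking the \emph{largest} such $\ell$, every later step increases $\rho$ by a factor at most $1+1/k$, so $\rho(k)\leq(1+1/k)^{k-\ell-1}\rho(\ell+1)\leq e\cdot\rho(\ell+1)$, and the lemma gives $\phi(G[P_i])\geq\rho(\ell+1)/(14k\ell)\geq\rho(k)/(40k^2)$ together with $\phi(P_i)\leq\ell\rho(\ell)$. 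Your additive pigeonhole (largest increment of $\rho$) with a data-dependent $\eps$ reaches the same $\Omega(\rho(k)/k^2)$ with essentially the same $k^2$ loss, so the two arguments are comparable in strength. Note, however, that your stated ``conceptual crux'' --- that a multiplicative pigeonhole necessarily degrades exponentially in $k$ --- is not correct: because $\rho(1)=0$, a gap of fixed ratio $1+1/k$ always exists, and the ``largest such index'' device caps the accumulated ratio between $\rho(\ell+1)$ and $\rho(k)$ at $e$; this is exactly what the paper does.

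Two points in your execution need repair, though both are fixable in a line. First, \autoref{thm:kexpander} is stated only for $0<\eps<1$ and with the strict hypothesis $\rho(\ell^*+1)>(1+\eps)\rho(\ell^*)$, whereas your choice $\eps=\rho(\ell^*+1)/\rho(\ell^*)-1$ can be $\geq 1$, gives equality rather than strict inequality, and is undefined when $\rho(\ell^*)=0$. Second, your parenthetical dismissal of the case $\rho(\ell^*)=0$ is wrong: the pigeonhole guarantees only that the increment $\rho(\ell^*+1)-\rho(\ell^*)$ is positive, not that $\rho(\ell^*)$ is; if $G$ has $c\leq k-1$ connected components (which is consistent with $\lambda_k>0$), then $\rho(2)=\cdots=\rho(c)=0$ and the maximal increment can sit at $\ell^*=c\geq 2$. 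The repair: apply \autoref{thm:kexpander} with $\eps'=\min\{\eps/2,1/2\}$ (or any fixed constant in $(0,1)$ when $\rho(\ell^*)=0$). The strict hypothesis $(1+\eps')\rho(\ell^*)<\rho(\ell^*+1)$ then holds, and since $\rho(\ell^*+1)\geq\rho(\ell^*+1)-\rho(\ell^*)\geq\rho(k)/(k-1)$ one still gets $\eps'\rho(\ell^*+1)\geq\rho(k)/(4(k-1))$ in every case, hence $\phi(G[P_i])\geq\eps'\rho(\ell^*+1)/(14\ell^*)=\Omega(\rho(k)/k^2)$, while $\phi(P_i)\leq\ell^*\rho(\ell^*)$ is unaffected (and is $0$ in the disconnected case). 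With that correction, and the translation to eigenvalues via \autoref{thm:higherordercheeger} exactly as you indicate, your proof is complete.
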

The above theorem can be seen as a generalization of \autoref{thm:higherordercheeger}. 

\paragraph{Algorithmic Results}
The above result is not algorithmic   but with some loss in the parameters we can make them algorithmic. 

\begin{theorem}[Algorithmic Theorem]
\label{thm:expanderalgorithm}
There is a simple local search algorithm that for any $k\geq 1$ if $\lambda_k>0$ finds
a $\ell$-partitioning of $V$ into sets $P_1,\ldots,P_\ell$ that is a $(\Omega(\lambda_k^2/k^4), O(k^6\sqrt{\lambda_{k-1}})$
where $1\leq \ell < k$. 
If $G$ is unweighted the algorithm runs in a polynomial time in the size of $G$.
\end{theorem}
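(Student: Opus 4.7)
The plan is to make \autoref{thm:higherorderlambdak} algorithmic by replacing every existential appeal to an optimal sparse cut with a call to the Spectral Partitioning algorithm of \autoref{thm:cheegerinequality}, at the price of a quadratic loss in the inside conductance exactly as in standard Cheeger rounding. Concretely, I would run a local search on partitions of $V$: starting from the trivial partition $\{V\}$, the algorithm inspects each current part $P_i$ and, whenever the inside conductance $\phi(G[P_i])$ falls below a threshold $\phi^\star = \Theta(\lambda_k / k^2)$, invokes Spectral Partitioning on the induced subgraph $G[P_i]$ to obtain a cut $(S, P_i - S)$ of conductance at most $\sqrt{4 \phi(G[P_i])} = O(\sqrt{\lambda_k}/k)$. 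The algorithm then accepts one of two local moves, either splitting $P_i$ along this cut or reattaching $S$ to a neighbouring part $P_j$, choosing whichever strictly decreases a global potential such as the total weight $\sum_{i\neq j} w(P_i,P_j)$ of cut edges.

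To bound the number of parts produced, I would appeal to the higher-order Cheeger inequality (\autoref{thm:higherordercheeger}) together with the hypothesis $\lambda_k > 0$: if the search ever reached a partition into $k$ sets all of outer conductance much smaller than the inside threshold $\phi^\star$, then $\rho(k)$ would be much smaller than $\lambda_k/2$, contradicting $\lambda_k/2 \leq \rho(k)$. Hence the local search must halt at some $\ell \leq k-1$. The outer-conductance bound $O(k^6 \sqrt{\lambda_{k-1}})$ would be obtained by initialising the search against the $(k-1)$-partition of \autoref{thm:higherordercheeger}, whose maximum conductance is $O(k^2 \sqrt{\lambda_{k-1}})$, and arguing that each accepted move can grow some $\phi(P_j)$ by at most the Spectral cut value $O(\sqrt{\lambda_k}/k)$. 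Amortising over the $O(k)$ moves that can affect any given part before it must itself be split in turn yields the additional $O(k^4)$ factor. The inside bound $\Omega(\lambda_k^2/k^4)$ is precisely $(\phi^\star)^2$, the quadratic loss inherent in certifying the inside conductance via Spectral Partitioning rather than via the optimal sparsest cut.

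The main obstacle will be the charging argument for the outer conductance: one must show that reattaching the cut-off piece $S$ to a neighbour $P_j$ never pushes $\phi(P_j)$ above the advertised $O(k^6 \sqrt{\lambda_{k-1}})$, and that each vertex traverses part boundaries only $O(k)$ times across the execution. A secondary subtlety is that the move which truly decreases the global potential may be counterintuitive (for instance, a merge-then-split can beat a pure split), so the menu of local moves must be rich enough that a strict improvement is always available as long as some $\phi(G[P_i]) < \phi^\star$; this is where the local-search analysis needs the most care. For the polynomial running time on unweighted $G$, I would use that the cut-weight potential is integer-valued and at most $|E|$, so every strictly improving move decreases it by at least one; since each iteration calls Spectral Partitioning only a constant number of times on induced subgraphs of $G$, the total running time is polynomial in $|V|$ and $|E|$.
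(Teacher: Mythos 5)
Your overall plan---a local search driven by the Spectral Partitioning algorithm, accepting the quadratic loss $\Omega(\lambda_k^2/k^4)$ from Cheeger rounding and capping the number of parts via \autoref{thm:higherordercheeger}---is indeed the paper's strategy in spirit, but the two load-bearing steps do not work as you state them. First, your bound $\ell\le k-1$: you argue that reaching $k$ parts of small outer conductance would contradict $\rho(k)\ge\lambda_k/2$, but the only outer-conductance guarantee your moves could plausibly maintain is the advertised $O(k^6\sqrt{\lambda_{k-1}})$, which can vastly exceed $\lambda_k$ (whenever $\lambda_k\ll\sqrt{\lambda_{k-1}}$), so no contradiction with $\rho(k)\ge\lambda_k/2$ follows. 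The paper escapes this by maintaining, inside each part $P_i$, a backbone set $B_i\subseteq P_i$ whose conductance \emph{in $G$} never exceeds $(1+1/k)^{\ell}\rho^*$ with $\rho^*=\min\{\lambda_k/10,\,30c_0k^5\sqrt{\lambda_{k-1}}\}$ as in \eqref{eq:rhosdef}; this invariant is preserved using \autoref{lem:minST} and the measure $\inside(\cdot,\cdot)$, it is the $B_i$'s (not the $P_i$'s) that force $\ell<k$ via $\rho(k)\ge\lambda_k/2$, and the outer bound comes from $\phi(P_i)\le\ell\,\phi(B_i)$ (\autoref{lem:condPi}) rather than from your amortized charging of moves, which is not substantiated. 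Relatedly, a cut that is sparse in $G[P_i]$ need not be sparse in $G$, so your plain ``split along the spectral cut'' move does not keep outer conductances under control; the paper only creates a new part out of $S_B$, $\bS_B$ or $S_P$ after explicitly checking its conductance in $G$ against the threshold $(1+1/k)^{\ell+1}\rho^*$.

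Second, your progress and termination argument is broken: splitting $P_i$ along the spectral cut strictly \emph{increases} the total cut weight $\sum_{i\ne j}w(P_i,P_j)$, so ``accept whichever move decreases that potential'' can leave you with no admissible move even while some $\phi(G[P_i])<\phi^\star$. The crux you defer (``the menu of local moves must be rich enough that a strict improvement is always available'') is exactly the paper's main claim about \autoref{alg:kexpander}: whenever Spectral Partitioning finds a cut with $\phi_{G[P_i]}(S),\phi_{G[P_i]}(P_i\setminus S)<\phiin$, at least one of five specific moves applies, and this is proved by contradiction via \autoref{lem:Piexpander} together with \autoref{lem:disjointsets} (disjointly supported test functions from \cite{LOT12}) when $\ell<k-1$---which is precisely where both terms in the definition of $\rho^*$ and the $k^6\sqrt{\lambda_{k-1}}$ outer bound originate; nothing in your sketch plays this role. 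Termination is then obtained not from a single monotone potential but lexicographically: new parts are created at most $k-1$ times, backbone refinements happen at most $n$ times, and only for fixed backbones does the inter-part cut weight decrease, giving the $O(kn\cdot|E|)$ iteration bound in the unweighted case. Without the backbone structure and the always-available-move claim, your proposal does not yield the stated guarantees.
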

\noindent The details of the above algorithm are described in \autoref{alg:kexpander}. 
We remark that the algorithm does not use any SDP or LP relaxation of the problem. It only uses
the Spectral Partitioning algorithm as a subroutine. 
Furthermore, unlike the spectral clustering algorithms studied in \cite{NJW02,LOT12}, our algorithm
does not use multiple eigenfunctions of the normalized laplacian matrix. It rather iteratively refines a 
partitioning of $G$ by adding non-expanding sets that induce an expander.

Suppose that there is a large gap between $\lambda_{k}$ and $\lambda_{k+1}$. Then, the  above
theorem (together with \ref{lem:disjointsets}) implies that there is a $k$ partitioning of $V$ such that inside conductance of each set is significantly larger than its outside conductance in $G$. Furthermore, such a partitioning can be found in polynomial time. This partitioning may represent one of the best $k$-clusterings of the graph $G$.

If
instead of the Spectral Partitioning algorithm we use the $O(\sqrt{\log n})$-approximation
algorithm for $\phi(G)$ developed in \cite{ARV04} the same proof 
implies that $P_1,\ldots,P_\ell$ are a 
$$ \Big( \Omega\Big(\frac{ \lambda_k}{k^2\cdot\sqrt{\log(n)}}\Big), k^3\sqrt{\lambda_{k-1}}\Big)$$
clustering.

To the best of our knowledge, the above theorem provides the first polynomial time algorithm that establishes
a Cheeger-type inequality for the inside/outside conductance of sets in a $k$-way partitioning.

\paragraph{Main Technical Result}
The main technical result of this paper is the following theorem. We show that
even if there is a very small gap between $\rho(k)$ and $\rho(k+1)$ we can  guarantee the existence of a $(\Omega_k(\rho(k+1)),O_k(\rho(k)))$-clustering where we in $\Omega_k(.), O_k(.)$ notations we 
dropped the dependency to $k$.
\begin{theorem}[Existential Theorem]
\label{thm:kexpander}
If $\rho_G(k+1) > (1+\eps) \rho_G(k)$ for some $0<\eps<1$, then 
\begin{enumerate}[i)]
\item There exists $k$ disjoint subsets of $V$ that are a $(\eps\cdot\rho(k+1)/7,\rho(k))$-clustering.
\item There exists a $k$-partitioning of $V$ that is  a $(\eps\cdot\rho(k+1)/(14k),k\rho(k))$-clustering.
\end{enumerate}
\end{theorem}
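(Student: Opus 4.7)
The plan is to prove both parts by starting from an optimal collection of $k$ disjoint sets witnessing $\rho(k)$ and iteratively refining it, using the definition of $\rho(k+1)$ to force progress.

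For part (i), let $A_1,\ldots,A_k$ be $k$ disjoint sets with $\max_i \phi_G(A_i)=\rho(k)$. I would maintain the invariant $\max_i\phi_G(A_i)\leq\rho(k)$ while trying to enforce $\phi(G[A_i])\geq\eps\rho(k+1)/7$. If some $A_i$ violates the inside bound, take an inner sparsest cut $(B,C)$ of $A_i$ witnessing it, with $\vol_{G[A_i]}(B)\leq\vol_{G[A_i]}(C)$ and $w(B,C)\leq(\eps\rho(k+1)/7)\cdot\vol_{G[A_i]}(B)$. The collection $A_1,\ldots,A_{i-1},B,C,A_{i+1},\ldots,A_k$ has $k+1$ disjoint sets, so by the definition of $\rho(k+1)$ its maximum conductance is at least $\rho(k+1)$. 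Since $\phi_G(A_j)\leq\rho(k)<\rho(k+1)$ for $j\neq i$, one of $B,C$, call it $X$, must satisfy $\phi_G(X)\geq\rho(k+1)$. Let $Y$ denote the other piece; I replace $A_i$ by $Y$ and iterate.

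The core calculation shows that this replacement preserves the outside-conductance invariant, i.e., $\phi_G(Y)\leq\rho(k)$. Using $w(Z,\bar Z)=w(B,C)+w(Z,\bar A_i)$ for $Z\in\{B,C\}$ together with $w(B,\bar A_i)+w(C,\bar A_i)=w(A_i,\bar A_i)\leq\rho(k)\vol(A_i)$, and observing that $\vol_{G[A_i]}(B)\leq\vol_G(X)$ regardless of whether $X=B$ or $X=C$ (so that $w(B,C)\leq(\eps\rho(k+1)/7)\vol_G(X)$), the hypothesis $\phi_G(X)\geq\rho(k+1)$ forces $w(X,\bar A_i)\geq(1-\eps/7)\rho(k+1)\vol(X)$. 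Substituting these estimates into the definition of $\phi_G(Y)=(w(B,C)+w(Y,\bar A_i))/\vol(Y)$ and simplifying, the desired inequality $\phi_G(Y)\leq\rho(k)$ reduces to the algebraic condition $(1-2\eps/7)\rho(k+1)\geq\rho(k)$, which is implied by the gap $\rho(k+1)>(1+\eps)\rho(k)$ whenever $0<\eps\leq 1$. Since $|A_i|$ strictly decreases at each step while the other sets are untouched, the procedure terminates and produces the desired $(\eps\rho(k+1)/7,\rho(k))$-clustering.

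For part (ii), I would extend the disjoint family from (i) to a $k$-partitioning by assigning each free vertex $v\in V\setminus\bigcup_j A_j$ to the cluster $\argmax_i w(v,A_i)$, setting $P_i=A_i\cup X_i$. The greedy rule gives $w(X_i,\bigcup_{j\neq i}A_j)\leq(k-1)\,w(X_i,A_i)$ for each $i$, which combined with the outside-conductance bound on the $A_i$'s inflates the outside conductance by at most a factor of $k$, yielding $\phi_G(P_i)\leq k\rho(k)$. For the inside conductance, any cut inside $G[P_i]$ either separates $A_i$ non-trivially, in which case it crosses at least an $\eps\rho(k+1)/7$ fraction of the $A_i$-edges by part (i), or is concentrated on $X_i$ and its links to $A_i$; a standard partitioning-extension argument, of the flavor encapsulated in the partitioning lemma used elsewhere in the paper, loses at most another factor of $2k$ and yields $\phi(G[P_i])\geq\eps\rho(k+1)/(14k)$.

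The main obstacle is the splitting calculation in part (i): one has to handle both possibilities for which side is the high-conductance one, carefully relate $\vol_G$ to $\vol_{G[A_i]}$, and verify that the slack in the gap $\rho(k+1)>(1+\eps)\rho(k)$ exactly offsets the losses from the $w(B,C)$ term. A secondary difficulty is the extension step in part (ii) without destroying inside conductance when poorly connected free vertices are absorbed; the factor-$2k$ loss in both parameters indicates that the extension must balance a greedy outside-connectivity rule against a combinatorial control on how $X_i$ can cluster inside $P_i$.
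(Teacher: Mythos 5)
Your part (i) is correct, and it takes a more direct route than the paper: instead of iterating on the quantity $\inside(S,B_i)$ as in \autoref{lem:findTi} and \autoref{lem:minST}, you iterate on an inner sparsest cut of $G[A_i]$, and your replacement step checks out. Since $\vol_{G[A_i]}(B)\leq\vol(X)$ whichever of $B,C$ plays the role of $X$, you get $w(B,C)\leq\tfrac{\eps\rho(k+1)}{7}\vol(X)$, and then $\phi(X)\geq\rho(k+1)$ gives $w(X,\overline{A_i})\geq(1-\eps/7)\rho(k+1)\vol(X)$ and hence $\phi(Y)\leq\rho(k)+\tfrac{\vol(X)}{\vol(Y)}\bigl(\rho(k)-(1-2\eps/7)\rho(k+1)\bigr)\leq\rho(k)$, because $(1-2\eps/7)(1+\eps)\geq 1$ for $0<\eps\leq 1$; termination by cardinality is fine. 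So the $(\eps\rho(k+1)/7,\rho(k))$ disjoint family exists by your argument.

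Part (ii), however, has two genuine gaps. First, the vertex-by-vertex greedy rule $v\mapsto\argmax_i w(v,A_i)$ only controls edges from a free vertex to the backbones; it says nothing about edges between two free vertices assigned to different clusters, so it does not bound $w(P_i,\overline{P_i})$: a single heavy edge between two free vertices with slight preferences for different backbones already defeats $\phi(P_i)\leq k\rho(k)$. What the argument needs is the set-level guarantee that every $S\subseteq P_i\setminus B_i$ satisfies $\e{S}{P_i}\geq\e{S}{V}/k$ (\autoref{fact:algprop}), which the paper obtains by a local search that moves whole sets, not single vertices (\autoref{alg:expanderpartition}); only then does \autoref{lem:condPi} apply. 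Second, and more fundamentally, the inside bound cannot be derived from $\phi(G[A_i])\geq\eps\rho(k+1)/7$ plus any assignment rule. For a cut $S$ of $P_i$ with $S_B=S\cap A_i$ and $S_P=S\setminus A_i$, the dangerous edges are those counted by $\e{S_B}{S_P}$: they leave $A_i$ and stay inside $S$, so they add volume to $S$ without crossing the cut, and they are invisible to $\phi(G[A_i])$, which is normalized by volumes inside $G[A_i]$. Concretely, if $A_i$ contains a vertex $v$ with tiny internal degree but huge edge weight to $V\setminus A_i$, then $G[A_i]$ can be an excellent expander and yet $S=\{v\}$ together with the external neighbours of $v$ absorbed into $P_i$ is a very sparse cut of $G[P_i]$; assigning those neighbours elsewhere instead would wreck the outside bound, so no extension rule fixes this. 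This is exactly why \autoref{lem:findTi} enforces the stronger, subset-wise property $\inside(S,B_i)\geq\eps/3$ for every $S\subseteq B_i$ (every subset of the backbone sends an $\Omega(\eps)$ fraction of its outgoing edges back into the backbone), and why the hypothesis of \autoref{lem:Piexpander} is stated in terms of $\inside$ rather than $\phi(G[B_i])$. Your inner-sparsest-cut refinement does not yield this property (it would never split off the vertex $v$ above, since $\{v\}$ is not a sparse cut of $G[A_i]$), so the ``standard partitioning-extension argument'' you invoke does not exist at this level of generality; to complete part (ii) you would have to strengthen step (i) so the backbone satisfies the subset-wise condition before merging.
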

The importance of the above theorem is that the gap is even independent of $k$ and it can be made arbitrarily close to 0. Compared to \autoref{thm:tanaka},  we require a very small gap between $\rho(k)$ and $\rho(k+1)$ and the quality of our $k$-partitioning has a linear loss in terms of $k$.  
We show tightness of above theorem in \autoref{subsec:tightnessexpanders}.

Using the above theorem it is easy to prove \autoref{thm:higherorderlambdak}.
\begin{proofof}{\autoref{thm:higherorderlambdak}}
Assume $\lambda_k > 0$ for some $k\geq 2$.
By \autoref{thm:higherordercheeger} we can assume $\rho(k) \geq \lambda_k/2>0$.
Since $\rho(1)=0$ we have $(1+1/k)\rho(\ell) < \rho(\ell+1)$ at least for one index $1\leq \ell < k$.
Let  $\ell$ be the largest index such that $(1+1/k) \rho(\ell) <  \rho(\ell+1)$; it follows that
\begin{equation}
\label{eq:diffpartition}
 \rho(k) \leq (1+1/k)^{k-\ell-1}\rho(\ell+1) \leq e\cdot \rho(\ell+1). 
 \end{equation}
Therefore, by part (ii) of \autoref{thm:kexpander} there is a $\ell$-partitioning of $V$ into sets $P_1,\ldots, P_\ell$ such that for all $1\leq i\leq \ell$,
$$ \phi(G[P_i]) \geq \frac{\rho(\ell+1)}{14k\cdot \ell} \geq \frac{\rho(k)}{40k^2} \geq  \frac{\lambda_k}{80k^2}, \text{and} $$
$$ \phi(P_i) \leq \ell\rho(\ell) \leq O(\ell^3)\sqrt{\lambda_\ell}.$$
where we used \eqref{eq:diffpartition} and \autoref{thm:higherordercheeger}.
The following corollary follows.
\end{proofof}

Building on  \autoref{thm:higherordercheeger} we can also prove the existence of a 
good $k$-partitioning of $G$ if there is a large enough gap between $\lambda_k$ and $\lambda_{k+1}$. 
\begin{corollary}
\label{cor:kexpander}
There is a universal constant $c>0$, such that for any graph $G$ if $\lambda_{k+1}\geq c\cdot k^2\sqrt{\lambda_k}$, then there exists a $k$-partitioning
of $G$ that is a $( \Omega(\lambda_{k+1}/k), O(k^3\sqrt{\lambda_k}))$-clustering.
\end{corollary}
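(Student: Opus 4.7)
The plan is to deduce the corollary directly from part~(ii) of \autoref{thm:kexpander} by using \autoref{thm:higherordercheeger} to translate the hypothesis on $\lambda_k,\lambda_{k+1}$ into a multiplicative gap between $\rho(k)$ and $\rho(k+1)$.

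First, I would apply \autoref{thm:higherordercheeger} to both indices. This gives an upper bound $\rho(k)\le C_1 k^2\sqrt{\lambda_k}$ for some absolute constant $C_1$, and a lower bound $\rho(k+1)\ge \lambda_{k+1}/2$. Now assume the hypothesis $\lambda_{k+1}\ge c\,k^2\sqrt{\lambda_k}$ with $c$ chosen as a sufficiently large absolute constant (say $c\ge 4C_1$). Then
\[
\rho(k+1)\;\ge\;\tfrac12\lambda_{k+1}\;\ge\;\tfrac{c}{2}\,k^2\sqrt{\lambda_k}\;\ge\;2C_1 k^2\sqrt{\lambda_k}\;\ge\;2\,\rho(k),
\]
so $\rho(k+1)>(1+\eps)\rho(k)$ with $\eps=1$. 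Hence the hypothesis of \autoref{thm:kexpander} is satisfied.

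Next, I invoke part~(ii) of \autoref{thm:kexpander}: there is a $k$-partitioning of $V$ into $P_1,\dots,P_k$ that is a $\bigl(\eps\cdot\rho(k+1)/(14k),\,k\,\rho(k)\bigr)$-clustering. Plugging in $\eps=1$ and using the two Cheeger bounds above,
\[
\phi(G[P_i])\;\ge\;\frac{\rho(k+1)}{14k}\;\ge\;\frac{\lambda_{k+1}}{28 k}\;=\;\Omega\!\left(\frac{\lambda_{k+1}}{k}\right),
\]
and
\[
\phi(P_i)\;\le\;k\,\rho(k)\;\le\;C_1\,k^3\sqrt{\lambda_k}\;=\;O\!\left(k^3\sqrt{\lambda_k}\right),
\]
which is exactly the claimed $(\Omega(\lambda_{k+1}/k),\,O(k^3\sqrt{\lambda_k}))$-clustering.

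There is no real obstacle here beyond picking the universal constant $c$ correctly: it must dominate the hidden constant in \autoref{thm:higherordercheeger}'s upper bound $\rho(k)\le O(k^2)\sqrt{\lambda_k}$ so that the gap $\rho(k+1)\ge 2\rho(k)$ (and hence $\eps=1$) is forced. Once that is done, the rest is a direct substitution into \autoref{thm:kexpander}(ii). The argument parallels the proof of \autoref{thm:higherorderlambdak} in the excerpt, except that the assumed eigenvalue gap lets us use a constant $\eps$ and take $\ell=k$ rather than extracting some $\ell<k$ from a telescoping product.
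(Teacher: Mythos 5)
Your proof is correct and is essentially the intended derivation: the paper presents \autoref{cor:kexpander} as a direct consequence of \autoref{thm:higherordercheeger} combined with part (ii) of \autoref{thm:kexpander}, which is exactly your argument (choose $c$ to dominate the constant in $\rho(k)\leq O(k^2)\sqrt{\lambda_k}$, then plug the Cheeger-type bounds into the $(\eps\rho(k+1)/14k,\,k\rho(k))$-clustering). The only cosmetic point is that \autoref{thm:kexpander} is stated for $0<\eps<1$, so you should take, say, $\eps=1/2$ rather than $\eps=1$; the gap $\rho(k+1)\geq 2\rho(k)$ you establish more than suffices and this only changes the hidden constants.
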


\subsection{Tightness of Existential Theorem }
\label{subsec:tightnessexpanders}
In this part we provide several examples showing the tightness of \autoref{thm:kexpander}.
In the first example we show that if there is no gap between $\rho(k)$ and $\rho(k+1)$ then
$G$ cannot be partitioned into expanders. 
\begin{example}
In the first example we construct a graph such that there is no gap between $\rho(k)$ and $\rho(k+1)$ and we show that in any $k$-partitioning there is a set $P$ such that $\phi(G[P]) \ll \rho(k+1)$.
 Suppose $G$ is a star. Then, for any $k\geq 2$, $\rho(k)=1$. 
But, among any $k$ disjoint subsets of $G$ there is a set $P$ with $\phi(G[P])=0$. 
Therefore, for any $k\geq 2$, there is a set $P$ with $\phi(G[P]) \ll \rho(k+1)$.
\end{example}
\noindent In the next example we show that a {\em linear} loss in $k$ is necessary in the quality of our $k$-partitioning in part (ii) of \autoref{thm:kexpander}.
\begin{example}
In this example we construct a graph such that in any $k$-partitioning there is a set $P$
with $\phi(P)\geq \Omega(k\cdot \rho(k))$. Furthermore,
in any $k$ partitioning where  the conductance of every set is $O_k(\rho(k))$,
there is a set $P$ such that $\phi(G[P])\leq O(\rho(k+1)/k)$. 

Let $G$ be a union of $k+1$ cliques $C_0,C_1,\ldots,C_k$ each with $\approx n/(k+1)$ vertices
where $n\gg k$. 
Also, for any $1\leq i\leq k$, include an edge between $C_0$ and $C_i$.
In this graph $\rho(k) = \Theta(k^2/n^2)$ by choosing the $k$ disjoint sets $C_1,\ldots,C_k$.
Furthermore, $\rho(k+1)= \Theta(k\cdot \rho(k))$.

Now consider a $k$ partitioning of $G$. First of all if there is a set $P$ in the partitioning
that contains a proper subset of one the cliques, i.e., $\emptyset \subset (P\cap C_i)\subset C_i$ for some $i$, then $\phi(P) \geq \Omega_k(1/n) = \Omega_k(n\cdot \rho(k))$. Otherwise,
every clique is mapped to one of the sets in the partitioning. Now, let $P$ be the set containing $C_0$ ($P$ may contain at most one other clique). It follows that $\phi(P)=\Omega(k\cdot \rho(k))$.

Now,  suppose we have a partitioning of $G$ into $k$ sets such that the conductance of
each set is $O_k(\rho(k))$. By the arguments in above paragraph none of the sets in the partitioning can have a
proper subset of one cliques. Since we have $k+1$ cliques there is a set $P$ that contains 
exactly two cliques $C_i, C_j$, for $i\neq j$. It follows that $\phi(G[P]) \leq O(\rho(k)/k)$.
\end{example}

\subsection{Notations}
For a function $f:V\to\R$ let 
$$ \cR(f):=\frac{\sum_{(u,v)\in E} w(u,v)\cdot |f(u)-f(v)|^2}{\sum_{v\in V} w(v) f(v)^2}$$
The support of $f$ is the set of vertices with non-zero value in $f$,
$$\supp(f):=\{v\in V: f(v)\neq 0\}. $$
We say two functions $f,g:V\to\R$ are disjointly supported if $\supp(f)\cap \supp(g)=\emptyset$.

For $S\subseteq P\subseteq V$ we use $\phi_{G[P]}(S)$ to denote the conductance of $S$ in  
the induced subgraph $G[P]$. 
For $S,T\subseteq V$ we use 
$$\e{S}{T}:=\sum_{u\in S, v\in T\setminus S} w(u,v).$$
We remark that in the above definition $S$ and $T$ are not necessarily disjoint, so $\e{S}{T}$
is not necessarily the same as $\e{T}{S}$. 

For $S\subseteq B_i\subseteq V$ we define 
$$ \inside(S,B_i):=\frac{\e{S}{B_i}}{\frac{\vol(B_i-S)}{\vol(B_i)}\cdot \e{S}{V\setminus B_i} 
}
$$
Let us motivate the above definition. Suppose $B_i\subseteq V$ such that $\phi_G(B_i)$
is very small but $\phi(G[B_i])$ is very large. Then,  any $S\subseteq B_i$
such that $\vol(S)\leq \vol(B_i)/2$ satisfy the following properties.
\begin{itemize}
\item  Since $\phi_{G[B_i]}(S)$ is large, a large fraction of edges adjacent to vertices of $S$ must
leave this set.
\item Since $\phi_G(B_i)$ is small, a small fraction of edges adjacent to $S$ may leave $B_i$.
\end{itemize}
Putting above properties together we obtain that $\e{S}{B_i} \gtrsim \e{S}{V\setminus B_i}$, thus $\inside(S,B_i)$ is at least a constant. 
As we describe in the next section the converse of this argument is a crucial part of our proof.
In particular, if for any $S\subseteq B_i$, $\inside(S,B_i)$ is large, then $B_i$ has large
inside conductance, and it can be used as the ``backbone'' of our $k$-partitioning.

\subsection{Overview of the Proof}
We prove \autoref{thm:kexpander} in two steps. 
Let $A_1,\ldots,A_k$ be any $k$ disjoint sets such that $\phi(A_i) \leq (1+\eps) \rho(k+1)$.
In the first step we find $B_1,\ldots,B_k$ such that for $1\leq i\leq k$,  $\phi(B_i)\leq \phi(A_i)$ with the crucial property that any subset of $B_i$ has at least
a constant fraction of its outgoing edges inside $B_i$. We then use $B_1,\ldots,B_k$
as the ``backbone'' of our $k$-partitioning. We merge the remaining vertices with $B_1,\ldots,B_k$ to obtain $P_1,\ldots,P_k$ making sure that for each $S\subseteq P_i\setminus B_i$
at least $1/k$ fraction of the outgoing edges of $S$ go to $P_i$ (i.e., $\e{S}{P_i}\geq \e{S}{V}/k$).

We show that if   $2\max_{1\leq i\leq k}\phi(A_i) < \rho(k+1)$ then we can construct $B_1,\ldots,B_k$ such that
every $S\subseteq B_i$ satisfies $\inside(S,B_i)\geq \Omega(1)$ (see \autoref{lem:findTi}). For example, if $\vol(S)\leq \vol(B_i)/2$, we obtain that 
$$ \e{S}{B_i\setminus S} \gtrsim \e{S}{V}. $$
This property  shows that each $B_i$ has an inside conductance of $\Omega(\rho(k+1))$ (see \autoref{lem:Tiexpander}). In addition, it implies that any superset of $B_i$, $P_i\supseteq B_i$, has an inside conductance $\phi(G[P_i]) \gtrsim \alpha\cdot \rho(k+1)$
as long as for any $S\subseteq P_i\setminus B_i$, $\e{S}{B_i} \geq \alpha\cdot \e{S}{V}$ (see \autoref{lem:Piexpander}). 
By latter observation we just need to merge the vertices in $V\setminus B_1\setminus \ldots\setminus B_k$ with $B_1,\ldots,B_k$ and obtain a $k$-partitioning $P_1,\ldots,P_k$
such that for any $S\subseteq P_i\setminus B_i$, $\e{S}{P_i}\geq \e{S}{V}/k$.

\subsection{Background on Higher Order Cheeger's Inequality}
In this short section we use the machinery developed in \cite{LOT12} to show
that for any partitioning of $V$ into $\ell < k$ sets $P_1,\ldots,P_{\ell}$ the minimum inside
conductance of $P_i$'s is $\poly(k) \sqrt{\lambda_k}$.
\begin{theorem} [Lee et al.\cite{LOT12}]
\label{thm:disjointsuppfunctions}
There is a universal constant $c_0>0$ such that for any graph $G=(V,E)$ and $1\leq k\leq n$ there exists $k$ disjointly supported functions
$f_1,\ldots,f_k:V\to\R$ such that for each $1\leq i\leq k$,
$$ \cR(f_i)\leq c_0 k^6\lambda_k.$$
\end{theorem}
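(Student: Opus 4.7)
My proof plan follows the spectral-embedding framework. Let $\psi_1,\dots,\psi_k$ be orthonormal (in the degree-weighted inner product) eigenfunctions of $\cL$ for the eigenvalues $\lambda_1\leq\dots\leq\lambda_k$, and form the embedding $F:V\to\R^k$ by $F(v)=(\psi_1(v),\dots,\psi_k(v))$. Two identities will drive everything: orthonormality gives $\sum_v w(v)\|F(v)\|^2=k$, and the variational characterization gives
\[
\sum_{(u,v)\in E}w(u,v)\,\|F(u)-F(v)\|^2 \;=\; \sum_{i=1}^{k}\lambda_i \;\leq\; k\lambda_k.
\]
So the embedding has small total edge-distortion and total mass $k$, while the isotropy identity $\sum_v w(v) F(v) F(v)^\top = I$ rules out any degenerate radial direction.

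First I would pass to the radial projection $\hat F(v)=F(v)/\|F(v)\|$ on $\supp(F)$ and record the following pullback principle: if $\eta:S^{k-1}\to[0,1]$ is $L$-Lipschitz, then the function $g_\eta(v):=\eta(\hat F(v))\cdot\|F(v)\|$ obeys the edgewise inequality $|g_\eta(u)-g_\eta(v)|\leq O(L)\cdot\|F(u)-F(v)\|$, which can be verified by a short direct calculation splitting the difference into a radial part and a spherical part. Summing this bound against $w(u,v)$ over the edges gives a Dirichlet-form estimate $\sum_{e} w_e|g_\eta(u)-g_\eta(v)|^2 \leq O(L^2 \cdot k \lambda_k)$.

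Second, and this is the main obstacle, I would construct $k$ disjointly supported test functions $\eta_1,\dots,\eta_k$ on $S^{k-1}$, each $O(k^2)$-Lipschitz, such that the pullbacks $g_{\eta_i}$ carry mass $\sum_v w(v) g_{\eta_i}(v)^2 \gtrsim 1/k$. This is the geometric heart of the argument: using the isotropy identity to avoid degenerate directions, one partitions the embedded points into $k$ clusters of balanced mass and well-separated radial direction, then smooths the indicators of these clusters into Lipschitz bumps on the sphere (separating their supports by a thin strip whose width is tuned against the Lipschitz constant). The accumulation of $\poly(k)$ factors between Lipschitz constant, mass loss through the boundary strip, and the $1/k$ mass-per-cluster guarantee is what drives the $k^6$ in the statement.

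Finally, set $f_i:=g_{\eta_i}$; disjoint supports follow from disjointness of the $\eta_i$, and combining the two estimates gives
\[
\cR(f_i)\;\leq\;\frac{O(L^2)\cdot k\lambda_k}{\Omega(1/k)}\;=\;O(L^2 k^2\lambda_k)\;=\;O(k^6\lambda_k),
\]
as desired, with $L=O(k^2)$. The hardest step is the spherical partitioning in paragraph three, which is where the heavier technology of \cite{LOT12} (random partitions, radial smoothing, and localization on $S^{k-1}$) has to be invoked; the other steps reduce to routine Dirichlet-form bookkeeping on the embedding.
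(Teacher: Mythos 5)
This statement is not proved in the paper at all: it is imported verbatim from Lee, Oveis Gharan and Trevisan \cite{LOT12}, so there is no internal argument to compare yours against. Judged as a reconstruction of the cited proof, your outline has the right architecture and matches what \cite{LOT12} actually do: the spectral embedding $F=(\psi_1,\dots,\psi_k)$ with total mass $\sum_v w(v)\|F(v)\|^2=k$ and total Dirichlet energy $\sum_i\lambda_i\le k\lambda_k$, the isotropy identity, the radial projection, the pullback estimate $|g_\eta(u)-g_\eta(v)|\le O(L)\|F(u)-F(v)\|$ for bounded $L$-Lipschitz $\eta$ (which is a correct and standard computation, modulo the minor care needed at vertices with $F(v)=0$), and the final Rayleigh-quotient bookkeeping, which is arithmetically consistent with the $k^6$ exponent.

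The genuine gap, as you yourself flag, is your third paragraph. The existence of $k$ disjointly supported, $\poly(k)$-Lipschitz bumps on $S^{k-1}$ whose pullbacks each retain an $\Omega(1/\poly(k))$ share of the embedded mass is precisely the technical core of \cite{LOT12} (their spreading lemma plus random space/sphere partitioning and smooth localization), and it does not follow from the isotropy identity by routine means: isotropy rules out concentration in a single direction or low-dimensional subspace, but turning that into $k$ well-separated regions that each carry substantial mass, with the separation width tuned against the Lipschitz constant, is exactly where the heavy machinery lives, and you invoke it rather than prove it. So as a standalone proof the proposal is incomplete at its hardest step; as a summary of the cited proof it is accurate, and the surrounding Dirichlet-form reductions you do spell out are correct.
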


\begin{proposition}[Kwok et al.~\cite{KLLOT13}]
\label{prop:lambdakapp}
For any graph $G=(V,E)$, any $k\geq 1$ and any $k$ disjointly supported functions
$f_1,\ldots,f_k:V\to\R$ we have
$$ \lambda_k\leq 2\max_{1\leq i\leq k} \cR(f_i).$$
\end{proposition}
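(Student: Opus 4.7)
The plan is to apply the Courant--Fischer variational characterization of $\lambda_k$ to the subspace spanned by the $f_i$. The $k$-th smallest eigenvalue of $\cL$ equals, via the usual reduction to the generalized eigenvalue problem $Lf=\lambda Df$,
$$\lambda_k \;=\; \min_{\substack{U\subseteq\R^V \\ \dim U = k}}\ \max_{f\in U\setminus\{0\}}\cR(f),$$
so I would take $U:=\mathrm{span}(f_1,\ldots,f_k)$. Because the $f_i$ are nonzero and pairwise disjointly supported they are linearly independent, giving $\dim U = k$. The entire task reduces to showing that every nonzero combination $f=\sum_{i=1}^k c_i f_i$ satisfies $\cR(f)\le 2\max_i \cR(f_i)$; the proposition follows immediately.

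The denominator is painless: at each vertex $v$ at most one $f_i(v)$ is nonzero, so all cross terms vanish and
$$\sum_v w(v) f(v)^2 \;=\; \sum_i c_i^2 \sum_v w(v) f_i(v)^2.$$
For the numerator I would split each edge $(u,v)\in E$ according to how it meets the supports. If $u$ and $v$ both lie in $\supp(f_i)$ for a common $i$, or if exactly one of them lies in some $\supp(f_i)$, then $f$ is a scalar multiple of $f_i$ on the edge and $(f(u)-f(v))^2 = c_i^2(f_i(u)-f_i(v))^2$ exactly. In the remaining case, $u\in\supp(f_i)$ and $v\in\supp(f_j)$ with $i\neq j$, and I would apply $(a-b)^2\le 2a^2+2b^2$ and use $f_i(v)=f_j(u)=0$ to obtain
$$(f(u)-f(v))^2 \;\le\; 2c_i^2 f_i(u)^2 + 2c_j^2 f_j(v)^2 \;=\; 2c_i^2(f_i(u)-f_i(v))^2 + 2c_j^2(f_j(u)-f_j(v))^2.$$
In every case an edge's contribution to the numerator of $\cR(f)$ is bounded by twice its contribution to $\sum_i c_i^2\sum_{(u',v')\in E} w(u',v')(f_i(u')-f_i(v'))^2$, so summing over all edges yields
$$\sum_{(u,v)\in E} w(u,v)(f(u)-f(v))^2 \;\le\; 2\sum_i c_i^2 \sum_{(u,v)\in E} w(u,v)(f_i(u)-f_i(v))^2.$$

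Dividing this numerator bound by the denominator identity and setting $a_i := c_i^2 \sum_v w(v) f_i(v)^2$, the right-hand side becomes $2\sum_i a_i \cR(f_i)/\sum_i a_i \le 2\max_i \cR(f_i)$, which completes the argument modulo Courant--Fischer. The only genuinely delicate step is the edge accounting in the third display: one must verify that every cross-support edge is matched against the corresponding $\cR(f_i)$ and $\cR(f_j)$ terms without any double-counting, and the factor $2$ in the statement is precisely the cost of the elementary inequality $(a-b)^2\le 2a^2+2b^2$ used on those boundary edges — a loss that cannot be avoided since $f_1,\ldots,f_k$ need not be orthogonal in the $D$-inner product once combined across supports.
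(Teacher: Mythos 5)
The paper does not actually prove this proposition---it is quoted from Kwok et al.~\cite{KLLOT13}---but your argument is the standard one for it and is correct: the span of $k$ nonzero disjointly supported functions is $k$-dimensional, every combination satisfies $\cR\bigl(\sum_i c_i f_i\bigr)\le 2\max_i \cR(f_i)$ by exactly the edge-by-edge accounting you give (same-support and single-support edges contribute with no loss, cross-support edges cost the factor $2$ via $(a-b)^2\le 2a^2+2b^2$), and Courant--Fischer for the generalized problem $Lf=\lambda Df$ finishes it. One small inaccuracy in your closing remark: disjointly supported functions \emph{are} orthogonal in the $D$-inner product, so that is not the source of the loss; the factor $2$ is forced by the cross-support edges in the numerator, and it is tight---for a single edge with $f_1,f_2$ the two coordinate indicator functions one has $\cR(f_1)=\cR(f_2)=1$ while $\lambda_2=2$.
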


\begin{lemma}
\label{lem:disjointsets}
There is a universal constant $c_0>$ such that for any $k\geq 2$ and any partitioning of $V$ into $\ell$ sets  $P_1,\ldots,P_\ell$ of $V$ where $\ell\leq k-1$,
we have
$$ \min_{1\leq i\leq \ell} \lambda_2(G[P_i]) \leq 2c_0 k^6\lambda_k.$$
where $\lambda_2(G[P_i])$ is the second eigenvalue of the normalized laplacian matrix of the induced graph $G[P_i]$.
\end{lemma}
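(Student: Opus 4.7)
The plan is to apply Proposition~\ref{prop:lambdakapp} to the disjoint-union graph $H := \bigsqcup_{i=1}^{\ell} G[P_i]$ in conjunction with the $k$ disjointly supported functions furnished by Theorem~\ref{thm:disjointsuppfunctions}. First I would handle the degenerate case: if some $G[P_i]$ is disconnected then $\lambda_2(G[P_i]) = 0$ and the bound is immediate, so I may assume every $G[P_i]$ is connected. Under this assumption the normalized Laplacian of $H$ has exactly $\ell$ zero eigenvalues (one per connected component), and hence $\lambda_{\ell+1}(H) = \min_{1 \le i \le \ell} \lambda_2(G[P_i])$. Since $\ell+1 \le k$, eigenvalue monotonicity yields $\lambda_{\ell+1}(H) \le \lambda_k(H)$, so it suffices to upper-bound $\lambda_k(H)$.

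Next, I would apply Theorem~\ref{thm:disjointsuppfunctions} to $G$ with parameter $k$ to obtain $k$ disjointly supported functions $f_1, \dots, f_k : V \to \R$ satisfying $\cR_G(f_j) \le c_0 k^6 \lambda_k$ for every $j$. Since $V(H) = V(G)$ and the supports of the $f_j$'s are unchanged, these functions remain disjointly supported on $H$; applying Proposition~\ref{prop:lambdakapp} to $H$ yields $\lambda_k(H) \le 2 \max_j \cR_H(f_j)$. The crucial remaining step is the comparison $\cR_H(f_j) \le \cR_G(f_j)$: passing from $G$ to $H$ deletes all cross-partition edges, subtracting $\sum_{(u,v)\text{ cross}} w(u,v)\,|f_j(u) - f_j(v)|^2$ from the numerator and the corresponding ``boundary mass'' $\sum_v \bigl(w_G(v) - w_H(v)\bigr) f_j(v)^2$ from the denominator. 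The edgewise inequality $(a-b)^2 \le 2(a^2 + b^2)$ applied on each cross edge should show that these two simultaneous shrinkages cannot inflate the ratio. Chaining the three inequalities then gives $\min_i \lambda_2(G[P_i]) = \lambda_{\ell+1}(H) \le \lambda_k(H) \le 2 c_0 k^6 \lambda_k$, as required.

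The main obstacle will be carrying out the Rayleigh-quotient comparison $\cR_H \le \cR_G$ rigorously. Although dropping cross edges intuitively should not increase the quotient, both the numerator and the denominator shrink simultaneously, and verifying that the two shrinkages balance requires some careful bookkeeping. The edgewise Cauchy--Schwarz estimate is the key tool to handle the potentially adversarial case in which much of $f_j$'s mass sits near the partition boundary; any residual constant-factor slack can be absorbed into the universal constant $c_0$ in the final bound.
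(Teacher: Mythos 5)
The decisive step of your plan --- the comparison $\cR_H(f_j)\le \cR_G(f_j)$ after deleting all cross-partition edges --- is false in general, and this is a genuine gap rather than a bookkeeping issue. Deleting a cross edge $(u,v)$ removes $w(u,v)|f_j(u)-f_j(v)|^2$ from the numerator but also removes $w(u,v)\bigl(f_j(u)^2+f_j(v)^2\bigr)$ of degree mass from the denominator; if $f_j$ takes (nearly) equal nonzero values at the endpoints of heavy cross edges, the numerator loses essentially nothing while the denominator collapses, and the quotient goes \emph{up}. Concretely, take the path on vertices $a,b,c,d$ with unit weights on $ab$ and $cd$ and weight $M$ on $bc$, parts $P_1=\{a,b\}$ and $P_2=\{c,d\}$, and $f=(0,1,1,0)$: then $\cR_G(f)=2/(2M+2)$ while $\cR_H(f)=1$, so the ratio is unbounded in $M$ and cannot be absorbed into any universal constant. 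Your proposed tool points the wrong way: $(a-b)^2\le 2(a^2+b^2)$ shows the numerator loss is at most twice the denominator loss, which limits how much the quotient can \emph{decrease}; to conclude $\cR_H(f_j)\le\cR_G(f_j)$ you would need the removed numerator to be at least $\cR_G(f_j)$ times the removed denominator, and nothing guarantees that. Nor does \autoref{thm:disjointsuppfunctions} help: its functions are controlled only through $\cR_G$, and the partition $P_1,\ldots,P_\ell$ is arbitrary and unrelated to their supports, so one of them may well place most of its degree-weighted mass on vertices whose $G$-degree is dominated by cross edges to equal-valued neighbours.

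The surrounding frame of your argument is sound (the disconnected case, $\lambda_{\ell+1}(H)=\min_i\lambda_2(G[P_i])$ for the disjoint union, monotonicity $\lambda_{\ell+1}(H)\le\lambda_k(H)$, and applying \autoref{prop:lambdakapp} to $H$), and it is a genuinely different architecture from the paper's: the paper never forms $H$, but restricts each $g_i$ to the part minimizing the restricted Rayleigh quotient (inequality \eqref{eq:minrayleigh}), pigeonholes two of the $k$ disjointly supported functions onto a common part $P_{j^*}$ (possible since $\ell<k$), and applies \autoref{prop:lambdakapp} inside $G[P_{j^*}]$ with just those two restrictions. Note that the paper's route needs a restriction-versus-$G$ comparison only for the minimizing part, whereas yours needs it for every $f_j$ simultaneously on the whole union graph, which is the strictly stronger and, as shown above, false statement. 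To rescue your approach you would have to control the boundary mass $\sum_v\bigl(w_G(v)-w_H(v)\bigr)f_j(v)^2$ against $\sum_v w_G(v)f_j(v)^2$ for the functions you use --- exactly the delicate point you flagged but did not resolve --- so as written the proof does not go through.
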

\begin{proof}
Let $f_1,\ldots,f_k$ be the first $k$ eigenfunctions of $\cL$ corresponding to $\lambda_1,\ldots,\lambda_k$. By definition $\cR(f_i) = \lambda_i$.

By \autoref{thm:disjointsuppfunctions} there are $k$ disjointly supported functions $g_1,\ldots,g_k$ such that
$\cR(g_i)\leq c_0 k^6\lambda_k$. For any $1\leq j\leq \ell$, let $g_{i,j}$ be the restriction of $g_i$ to the induced subgraph
$G[P_j]$. It follows that
\begin{equation}
\label{eq:minrayleigh} \cR(g_i) \geq \frac{\sum_{j=1}^\ell \sum_{(u,v) \in E(P_j)} |g_i(v)-g_i(u)|^2}{\sum_{j=1}^\ell \sum_{v\in P_j} g_i(v)^2} \geq \min_{1\leq j\leq \ell} \frac{\sum_{(u,v)\in E(P_j)} |g_i(u)-g_i(v)|^2}{\sum_{v\in P_j} g_i(v)^2} = \min_{1\leq j\leq \ell} \cR(g_{i,j}).
\end{equation}
For each $1\leq i\leq k$ let $j(i):=\argmin_{1\leq j\leq \ell} \cR(g_{i,j})$.
Since $\ell < k$, by the pigeon hole principle,  there are two indices $1\leq i_1 < i_2 \leq k$
such that $j(i_1)=j(i_2)=j^*$ for some $1\leq j^*\leq \ell$.  
Since $g_1,\ldots,g_k$ are disjointly supported, by \autoref{prop:lambdakapp}
$$ \lambda_2(G[P_{j^*}]) \leq 2 \max\{\cR(g_{i_1,j^*}), \cR(g_{i_2,j^*})\} \leq 2\max\{ \cR(g_{i_1}), \cR(g_{i_2})\} \leq 2c_0 k^6\lambda_k.$$
where the second inequality follows by \eqref{eq:minrayleigh}.
\end{proof}
The above lemma is used in the proof of \autoref{thm:expanderalgorithm}.

\section{Proof of Existential Theorem}
In this section we prove \autoref{thm:kexpander}.
Let $A_1,\ldots,A_k$ are $k$ disjoint sets such that $\phi(A_i)\leq\rho(k)$
for all $1\leq i\leq k$. 
In the first lemma we construct $k$ disjoint sets $B_1,\ldots,B_k$ such that
their conductance in $G$ is only better than $A_1,\ldots,A_k$ with the additional property that
$\inside(S,B_i)\geq \eps/3$ for any $S\subseteq B_i$. 
\begin{lemma}
\label{lem:findTi}
Let $A_1,\ldots,A_k$ be $k$ disjoint sets s.t. $(1+\eps)\phi(A_i) \leq \rho(k+1)$ for $0<\eps<1$. For any $1\leq i\leq k$, there exist a set $B_i \subseteq A_i$ such that the following holds:
\begin{enumerate}
\item $\phi(B_i)\leq \phi(A_i)$.
\item For any $S\subseteq B_i$, 
$ \inside(S,B_i) \geq \eps/3.$ 
\end{enumerate}
\end{lemma}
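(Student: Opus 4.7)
The plan is to define $B_i$ as a nonempty subset of $A_i$ of minimum volume among those satisfying $\phi(B) \leq \phi(A_i)$; $A_i$ itself qualifies, so such a $B_i$ exists and condition~(1) is automatic. I would then establish condition~(2) by contradiction: supposing some $\emptyset \neq S \subsetneq B_i$ satisfies $\inside(S, B_i) < \eps/3$, I aim to show that either $B_i \setminus S$ or $S$ is itself a valid candidate, but of strictly smaller volume, contradicting the minimality of $B_i$.

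The essential ingredient is the hypothesis $(1+\eps)\phi(A_j) \leq \rho(k+1)$ for all $j$, applied to the $k+1$ pairwise disjoint sets $A_1, \dots, A_{i-1}, S, B_i \setminus S, A_{i+1}, \dots, A_k$. Since $\phi(A_j) < \rho(k+1)$ for every $j \neq i$, the definition of $\rho(k+1)$ forces either $\phi(S) \geq \rho(k+1)$ (Case~1) or $\phi(B_i \setminus S) \geq \rho(k+1)$ (Case~2). Throughout, I would abbreviate $c := \phi(A_i)$, $\alpha := \e{S}{V\setminus B_i}$, $\beta := \e{S}{B_i}$, $v := \vol(S)$, and $V := \vol(B_i)$; the inside bound then reads $\beta < (\eps/3)(V-v)/V \cdot \alpha$, and the identity
\[ \phi(B_i \setminus S) \;=\; \frac{\phi(B_i)\,V - (\alpha - \beta)}{V - v} \]
is the workhorse.

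In Case~1, combining $\phi(S)\,v = \alpha + \beta \geq (1+\eps)\,c\,v$ with the weaker form $\beta < (\eps/3)\alpha$ of the inside bound produces a lower bound $\alpha - \beta \geq \tfrac{(1+\eps)(1-\eps/3)}{1+\eps/3}\,c\,v$, which is at least $c\,v$ because $\eps \leq 1$. Substituting this into the identity above and using $\phi(B_i) \leq c$ immediately yields $\phi(B_i \setminus S) \leq c$, so $B_i \setminus S$ is the desired smaller candidate.

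Case~2 is the main obstacle. Here I would read the identity in the other direction: $\phi(B_i \setminus S) \geq (1+\eps)\phi(B_i)$ rearranges to $\alpha - \beta \leq c\bigl[(1+\eps)\,v - \eps\,V\bigr]$, which combined with the inside condition forces $v > \eps V/(1+\eps)$ so that the right-hand side is actually positive. Parametrising $\beta = \mu\alpha$ with $\mu < (\eps/3)(V - v)/V$, we get $\alpha \leq c[(1+\eps)v - \eps V]/(1 - \mu)$ and hence $\phi(S) \leq \tfrac{(1+\mu)c[(1+\eps) v - \eps V]}{(1-\mu)\,v}$. Verifying this upper bound is at most $c$ is the delicate calculation; evaluating at the worst-case $\mu = (\eps/3)(V-v)/V$ and simplifying, the required inequality reduces to $V/v \geq (2+\eps)/(3+\eps)$, which holds trivially since $V \geq v$. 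Thus $\phi(S) \leq c$ and $S$ is the smaller candidate. The factor $\vol(B_i - S)/\vol(B_i)$ in the definition of $\inside$ is what makes this reduction possible: without it the bound would fail whenever $v$ is close to $V$, and the constant $\eps/3$ in the threshold is chosen precisely so that the simplification goes through for all admissible $v$.
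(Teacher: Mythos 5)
Your proof is correct and follows essentially the same route as the paper: taking a minimum-volume $B_i\subseteq A_i$ with $\phi(B_i)\le\phi(A_i)$ and deriving a contradiction is just the extremal repackaging of the paper's iterative construction (\autoref{alg:findTi} plus the induction that conductance never increases), and your two-case analysis — using the $k+1$ disjoint sets $A_1,\dots,A_{i-1},S,B_i\setminus S,A_{i+1},\dots,A_k$ to force $\max\{\phi(S),\phi(B_i\setminus S)\}\ge\rho(k+1)\ge(1+\eps)\phi(A_i)$ and then showing the other side has conductance at most $\phi(A_i)$ — is exactly \autoref{lem:minST}. Your algebra checks out (in Case 2 the slack indeed factors as $\tfrac{\eps}{3}(1-\tfrac{v}{V})\bigl[3+\eps-\tfrac{v}{V}(2+\eps)\bigr]\ge 0$), so the only differences from the paper are cosmetic.
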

\begin{proof}
For each  $1\leq i\leq k$ we run  \autoref{alg:findTi} to construct $B_i$ from $A_i$. 
Note that although the algorithm is constructive, it may not run in polynomial time.
The reason is that we don't know any (constant factor approximation) algorithm for
$\min_{S\subseteq B_i} \inside(S,B_i)$. 
\begin{algorithm}
\begin{algorithmic}
\State $B_i=A_i$.
\Loop
\If {$\exists S\subset B_i$ such that  $\inside(S,B_i) \leq \eps/3$},
\State Update $B_i$ to either of $S$ or $B_i\setminus S$ with the smallest conductance in $G$.
\Else
\State {\bf return} $B_i$.
\EndIf
\EndLoop
\end{algorithmic}
\caption{Construction of $B_1,\ldots,B_k$ from $A_1,\ldots,A_k$}
\label{alg:findTi}
\end{algorithm}

First, observe that the algorithm always terminates after at most $|A_i|$ iterations of the loop since  $|B_i|$ decreases in each iteration.
The output of the algorithm always satisfies conclusion 2 of the lemma.  
So, we only need to bound the conductance of the output set. 
We show that  throughout the algorithm we always have 
\begin{equation}
\label{eq:Ticonductance}
\phi(B_i)\leq \phi(A_i).
\end{equation}
In fact, we prove something stronger. That is, the conductance of $B_i$ never increases in the entire run of the algorithm. 
We prove this by induction. At the beginning $B_i=A_i$, so \eqref{eq:Ticonductance} obviously holds. It remains to prove the inductive step.

Let $S\subseteq B_i$ such that $\inside(S,B_i)\leq \eps/3$. 
Among the $k+1$ disjoint sets 
$$\{ A_1,\ldots,A_{i-1}, S, B_i\setminus S, A_{i+1}, A_k\}$$ there is one 
of conductance $\rho_G(k+1)$. So,
$$\max\{\phi(S),\phi(B_i\setminus S)\}\geq \rho_G(k+1) \geq (1+\eps)\phi(A_i) \geq (1+\eps)\phi(B_i). 
$$
The inductive step follows from the following lemma.

\begin{lemma}
\label{lem:minST}
For any set $B_i\subseteq V$ and $S\subset B_i$, if $\inside(S,B_i)\leq \eps/3$ and
\begin{equation}
\label{eq:maxST}
 \max\{\phi(S),\phi(B_i\setminus S)\} \geq (1+\eps)\phi(B_i),
 \end{equation}
then $\min\{\phi(S), \phi(B_i\setminus S)\} \leq \phi(B_i).$
\end{lemma}
\begin{proof}
Let   $T=B_i\setminus S$.  Since $\inside(S,B_i)\leq \eps/3$,
\begin{equation}
\label{eq:implicationpsi}
 \e{S}{T} \leq \frac\eps3\cdot \frac{\vol(T)}{\vol(B_i)}\cdot \e{S}{V\setminus B_i} \leq \frac\eps3\cdot\e{S}{V\setminus B_i}.
 \end{equation}
 We consider two cases depending on whether $\phi(S) \geq (1+\eps)\phi(B_i)$. 
\begin{description}
\item [Case 1: $\phi(S)\geq (1+\eps)\phi(B_i)$.] First, by \eqref{eq:implicationpsi}.
\begin{eqnarray}
\label{eq:eSlower}
(1+\eps)\phi(B_i) \leq \phi(S) = \frac{\e{S}{T} + \e{S}{V-B_i}}{\vol(S)}
\leq \frac{(1+\eps/3) \e{S}{V-B_i}}{\vol(S)}
\end{eqnarray}
Therefore,
\begin{eqnarray*}
\phi(T) &=& \frac{\e{B_i}{V} - \e{S}{V-B_i} + \e{S}{T}}{\vol(T)} \\
&\leq & \frac{\e{B_i}{V} - (1-\eps/3) \e{S}{V-B_i}}{\vol(T)}\\
&\leq& \frac{\phi(B_i)(\vol(B_i) - \vol(S)(1+\eps/2)(1-\eps/3))}{\vol(T)}\\
&\leq & \frac{\phi(B_i)\vol(T) }{\vol(T)} = \phi(B_i).
\end{eqnarray*}
where the first inequality follows by \eqref{eq:implicationpsi} and the second inequality follows by \eqref{eq:eSlower} and that $\eps\leq 1$.

\item [Case 2: $\phi(T)\geq (1+\eps)\phi(B_i)$.] 
First,
\begin{equation}
\label{eq:Tbound}
 (1+\eps)\phi(B_i) \leq \phi(T) = \frac{\e{S}{T} + \e{T}{V-B_i}}{\vol(T)} 
 \end{equation}
Therefore,
\begin{eqnarray*}
\phi(S) &=& \frac{\e{B_i}{V} - \e{T}{V-B_i} + \e{S}{T}}{\vol(S)}\\
&\leq & \frac{\e{B_i}{V} - (1+\eps) \phi(B_i) \vol(T) + 2\e{S}{T}}{\vol(S)}\\
&\leq & \frac{\phi(B_i) (\vol(B_i) - (1+\eps)\vol(T)) + \frac{2\eps}{3} \cdot \vol(T)\cdot\phi(B_i)}{\vol(S)}\\
&\leq & \frac{\phi(B_i)\vol(S)}{\vol(S)}   = \phi(B_i) .
\end{eqnarray*}
where the first inequality follows by \eqref{eq:Tbound}, the second inequality follows by \eqref{eq:implicationpsi} and that $\e{S}{V\setminus B_i}\leq \e{B_i}{V\setminus B_i}$. So we get $\phi(S)\leq \phi(B_i)$. 
\end{description}
This completes the proof of \autoref{lem:minST}.
\end{proof}
\noindent This completes the proof of \autoref{lem:findTi}.
\end{proof}
\noindent Note that sets that we construct in the above lemma do not necessarily define a partitioning of~$G$. 
In the next lemma we show that the sets $B_1,\ldots,B_k$ that are constructed above have large inside conductance.
\begin{lemma}
\label{lem:Tiexpander}
Let  $B_i\subseteq V$,
and  $S\subseteq B_i$
such that $\vol(S)\leq \vol(B_i)/2$. If $\inside(S,B_i),\inside(B_i\setminus S,B_i)\geq \eps/3$ for $\eps\leq 1$, then
$$ \phi_{G[B_i]}(S) \geq \frac{\e{S}{B_i}}{\vol(S)} \geq \frac{\eps}{7}\cdot \max\{\phi(S),\phi(B_i\setminus S)\},$$
\end{lemma}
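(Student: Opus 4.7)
The plan is to unpack the hypotheses on $\inside(S,B_i)$ and $\inside(B_i\setminus S, B_i)$ into inequalities comparing the edges crossing inside $B_i$ with the edges leaving $B_i$, and then simply expand the three quantities $\phi_{G[B_i]}(S)$, $\phi(S)$, and $\phi(B_i\setminus S)$ in terms of $\e{S}{B_i}$ and $\e{S}{V\setminus B_i}$, $\e{B_i\setminus S}{V\setminus B_i}$.

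First I would handle the easy inequality $\phi_{G[B_i]}(S)\geq \e{S}{B_i}/\vol(S)$. The volume of $S$ in the induced subgraph $G[B_i]$ is at most the volume of $S$ in $G$ (since edges from $S$ to $V\setminus B_i$ are dropped), and the cut weight $w_{G[B_i]}(S,B_i\setminus S)$ equals $\e{S}{B_i}$ by the paper's notation. So the first inequality is immediate from the definition of conductance.

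Next I would bound $\phi(S)$. Writing $\phi(S) = (\e{S}{B_i}+\e{S}{V\setminus B_i})/\vol(S)$, the hypothesis $\inside(S,B_i)\geq \eps/3$ together with $\vol(S)\leq \vol(B_i)/2$ (so that $\vol(B_i\setminus S)/\vol(B_i)\geq 1/2$) gives $\e{S}{V\setminus B_i}\leq (6/\eps)\e{S}{B_i}$. Plugging this in yields $\phi(S)\leq ((\eps+6)/\eps)\cdot \e{S}{B_i}/\vol(S)\leq (7/\eps)\cdot \e{S}{B_i}/\vol(S)$, using $\eps\leq 1$.

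The slightly harder part is bounding $\phi(T)$ with $T=B_i\setminus S$: here $\vol(T)$ may be much larger than $\vol(S)$, and the $\inside$ hypothesis for $T$ has a small factor $\vol(S)/\vol(B_i)$ in front of $\e{T}{V\setminus B_i}$, so the upper bound on $\e{T}{V\setminus B_i}$ in terms of $\e{T}{B_i}=\e{S}{B_i}$ carries an unpleasant factor $\vol(B_i)/\vol(S)$. The key observation that rescues this is that we divide by $\vol(T)\geq \vol(B_i)/2$, so this factor is compensated: expanding
\[
\phi(T)=\frac{\e{S}{B_i}+\e{T}{V\setminus B_i}}{\vol(T)}\leq \frac{\e{S}{B_i}}{\vol(T)}\Bigl(1+\frac{3}{\eps}\cdot\frac{\vol(B_i)}{\vol(S)}\Bigr),
\]
and using $\vol(B_i)\leq 2\vol(T)$ plus $\vol(S)\leq \vol(T)$ collapses the right-hand side to $(1+6/\eps)\cdot \e{S}{B_i}/\vol(S)\leq (7/\eps)\cdot \e{S}{B_i}/\vol(S)$. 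Taking the max over the two bounds for $\phi(S)$ and $\phi(T)$ completes the proof. The main thing to be careful about is this volume bookkeeping in the $\phi(T)$ case, since it is the only place where $\vol(S)\leq \vol(B_i)/2$ is genuinely used to cancel the $\vol(B_i)/\vol(S)$ factor.
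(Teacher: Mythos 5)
Your proof is correct and follows essentially the same route as the paper: both arguments unpack the hypotheses $\inside(S,B_i),\inside(B_i\setminus S,B_i)\geq \eps/3$ to bound $\e{S}{V\setminus B_i}$ and $\e{B_i\setminus S}{V\setminus B_i}$ in terms of $\e{S}{B_i}$, and use $\vol(S)\leq \vol(B_i)/2$ exactly where you do for the volume bookkeeping. The only difference is presentational: you phrase it as an upper bound on $\phi(S)$ and $\phi(B_i\setminus S)$, while the paper lower-bounds $\e{S}{B_i}/\vol(S)$ and then adds the inside term; the constants work out identically.
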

\begin{proof} 
Let $T=B_i\setminus S$. 
First, we lower bound $\phi_{G[B_i]}(S)$ by $\eps\cdot \phi(S)/7$. Since $\inside(S,B_i)\geq \eps/3$, 
\begin{eqnarray*}
\label{eq:T1subsets}
 \frac{\e{S}{B_i}}{\vol(S)} = \frac{\inside(S,B_i)\cdot \frac{\vol(T)}{\vol(B_i)} \cdot \e{S}{V\setminus B_i}}{\vol(S)}
 \geq  \frac{\eps\cdot \e{S}{V\setminus B_i}}{6\vol(S)}
\end{eqnarray*}
where the first inequality follows by the assumption $\vol(S)\leq \vol(B_i)/2$.
Summing up both sides of the above inequality with $\frac{\eps \e{S}{B_i}}{6\vol(S)}$ and dividing by $1+\eps/6$ we obtain
\begin{eqnarray*}
\frac{\e{S}{B_i}}{\vol(S)} \geq \frac{\eps/6}{(1+\eps/6}\cdot \frac{\cdot \e{S}{V}}{\vol(S)} \geq \frac{\eps\cdot \phi(S)}{7}. 
\end{eqnarray*}
where we used $\eps\leq 1$.
It remains to $\phi_{G[B_i]}(S)$ by $\eps \cdot \phi(B_i\setminus S)/7$. Since $\inside(T,B_i)\geq \eps/3$, 
\begin{eqnarray*}
\frac{\e{S}{B_i}}{\vol(S)} = \frac{\e{T}{B_i}}{\vol(S)} &=& \frac{\inside(T,B_i)\cdot \e{T}{V\setminus B_i}}{\vol(B_i)} \\
&\geq& \frac{\eps}{3}\cdot\frac{\e{T}{V\setminus B_i}}{\vol(B_i)}\\
&\geq& \frac{\eps}{6}\cdot \frac{\e{T}{V\setminus B_i}}{\vol(T)} 
\end{eqnarray*}
where the last inequality follows by the assumption $\vol(S)\leq \vol(B_i)/2$.
Summing up both sides of the above inequality with $\frac{\eps\cdot \e{S}{B_i}}{6\vol(S)}$ we obtain,
\begin{eqnarray*}
(1+\eps/6) \frac{\e{S}{B_i}}{\vol(S)} \geq \frac{\eps}{6} \cdot \frac{\e{T}{V}}{\vol(T)} \geq \frac{\eps\cdot \phi(T)}{6}.
 \end{eqnarray*}
where we used the assumption $\vol(S)\leq \vol(B_i)/2$.
The lemma follows using the fact that $\eps\leq 1$.
\end{proof}
Let $B_1,\ldots,B_k$ be the sets constructed in \autoref{lem:findTi}. Then, for each $B_i$ and $S\subseteq B_i$ since  $\phi(B_j)< \rho(k+1)$ for all $1\leq j\leq k$, 
we get
$$\max(\phi(S),\phi(T))\geq \rho(k+1).$$
Therefore, by the above lemma, for all $1\leq i\leq k$, 
$$\phi(G[B_i]) \geq \eps \cdot \rho(k+1)/7 \text{, and } \phi(B_i)\leq \max_{1\leq i\leq k}\phi(A_i)\leq \rho(k).$$
This completes the proof of part (i) of \autoref{thm:kexpander}.

It remains to prove part (ii). To prove part (ii) we have to turn $B_1,\ldots,B_k$ into a
$k$-partitioning. 
We run the following algorithm to merge the vertices that are not included in $B_1,\ldots,B_k$.
Again, although this algorithm is constructive, it may not run in polynomial time. 
The main difficulty is in finding a set $S\subset P_i\setminus B_i$ such that $\e{S}{P_i}<\e{S}{P_j}$, if such a set exists. 
\begin{algorithm}
\begin{algorithmic}
\State Let $P_i=B_i$ for all $1\leq i\leq k-1$, and $P_k = V\setminus B_1\setminus B_2\setminus \ldots\setminus B_{k-1}$ (note that $B_k\subseteq P_k$).
\While {there is $i\neq j$ and $S\subset P_i \setminus B_i$, such that
$\e{S}{P_i} < \e{S}{P_j}$,}
\State Update $P_i=P_i\setminus S$, and merge $S$ with $\argmax_{P_j} \e{S}{P_j}$.
\EndWhile
\end{algorithmic}
\caption{Construction of $P_1,\ldots,P_k$ based on the $B_1,\ldots,B_k$}
\label{alg:expanderpartition}
\end{algorithm}

First, observe that above algorithm always terminates in a finite number of steps.
This is because in each iteration of the loop the weight of edges between the sets decreases. That is,
$$ \sum_{1\leq i<j\leq k} \e{P_i}{P_j}$$ 
decreases.
The above algorithm has two important properties which are the key ideas of the proof.
\begin{fact}
\label{fact:algprop}
The output of the above algorithm satisfies the following. 
\begin{enumerate}
\item For all $1\leq i\leq k$, $B_i\subseteq P_i$.
\item For any $1\leq i\leq k$, and any $S\subseteq P_i\setminus B_i$, we have
$$ \e{S}{P_i} \geq \e{S}{V}/k.$$
\end{enumerate}
\end{fact}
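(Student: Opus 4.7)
}

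The plan is to read the two properties as invariants maintained by \autoref{alg:expanderpartition}, with (1) an inductive invariant that is preserved by every iteration and (2) a direct consequence of the termination condition. Both items will lean on two structural features of the algorithm: each loop iteration removes some $S \subseteq P_i \setminus B_i$ from $P_i$ (never touching $B_i$), and the move only enlarges the destination $P_j$.

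For part (1), at initialization $P_i = B_i$ for $i < k$ is trivial, and for $i = k$ the containment $B_k \subseteq P_k$ uses that $B_1,\ldots,B_k$ are pairwise disjoint, which in turn follows from \autoref{lem:findTi} since $B_j \subseteq A_j$ and the $A_j$'s are disjoint. I will then induct on the number of iterations. In a generic step, the chosen $S$ lies in $P_i \setminus B_i$, so removing $S$ leaves $B_i \subseteq P_i$ intact; the destination $P_j$ only grows, so $B_j \subseteq P_j$ is preserved. No other set changes.

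For part (2), I will argue by contradiction using the premise that the algorithm has terminated. Suppose for some $i$ there exists a nonempty $S \subseteq P_i \setminus B_i$ with $\e{S}{P_i} < \e{S}{V}/k$. Since $P_1,\ldots,P_k$ partition $V$ and $S \subseteq P_i$, every edge leaving $S$ goes either to $P_i \setminus S$ or to some $P_j$ with $j \neq i$, so $\e{S}{V} = \sum_{j=1}^{k} \e{S}{P_j}$. Combined with the assumed inequality, this forces $\sum_{j \neq i} \e{S}{P_j} > (k-1)\e{S}{P_i}$, and by averaging there exists $j \neq i$ with $\e{S}{P_j} > \e{S}{P_i}$. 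But that is exactly the triggering condition of the while-loop, contradicting termination.

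There is no real obstacle in the argument itself; the mild subtlety is just to make sure that the pigeonhole is applied to a decomposition that actually holds, namely $\e{S}{V} = \sum_j \e{S}{P_j}$, which requires $S \subseteq P_i$ so that every out-neighbor of $S$ is in some $P_j$ and the contribution to $\e{S}{P_i}$ only counts edges into $P_i \setminus S$. Termination itself (needed to invoke the contradiction) is already established in the paragraph preceding the fact via the monotone potential $\sum_{i<j}\e{P_i}{P_j}$, which strictly decreases by $\e{S}{P_j} - \e{S}{P_i} > 0$ at every step, so I will simply cite that rather than re-derive it.
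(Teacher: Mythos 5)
Your proof is correct and matches what the paper intends: the paper states this Fact without proof, treating part (1) as immediate from the algorithm only ever moving subsets of $P_i\setminus B_i$, and part (2) as the contrapositive of the while-loop condition via the decomposition $\e{S}{V}=\sum_j \e{S}{P_j}$ and averaging. Your write-up simply makes these omitted steps explicit, so it takes essentially the same route.
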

Next, we use the above properties to show that the resulting sets $P_1,\ldots,P_k$ are non-expanding in $G$
\begin{lemma}
\label{lem:condPi}
Let $B_i\subseteq P_i\subseteq V$ such that 
 $\e{P_i\setminus B_i}{B_i} \geq \e{P_i\setminus B_i}{V}/k$. Then
$$\phi(P_i) \leq k\phi(B_i). $$ 
\end{lemma}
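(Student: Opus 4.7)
The plan is to decompose the cut $\e{P_i}{V \setminus P_i}$ according to whether its $P_i$-endpoint lies in $B_i$ or in the extra part $R := P_i \setminus B_i$, and then use the hypothesis on $R$ together with the $B_i$-boundary identity to bound everything in terms of $\phi(B_i)\vol(B_i)$.

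First I would write the cut as
\[
\e{P_i}{V \setminus P_i} = \e{B_i}{V \setminus P_i} + \e{R}{V \setminus P_i}.
\]
Next, since $V \setminus B_i$ is the disjoint union of $R$ and $V \setminus P_i$, the boundary of $B_i$ splits as
\[
\phi(B_i)\vol(B_i) \;=\; \e{B_i}{V \setminus B_i} \;=\; \e{B_i}{V \setminus P_i} + \e{R}{B_i},
\]
where I used $\e{B_i}{R} = \e{R}{B_i}$. This gives me an expression for $\e{R}{B_i}$ in terms of $\phi(B_i)\vol(B_i)$ and $\e{B_i}{V\setminus P_i}$.

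Then I would use the hypothesis $\e{R}{B_i} \geq \e{R}{V}/k$. Writing $\e{R}{V} = \e{R}{B_i} + \e{R}{V \setminus P_i}$ and rearranging immediately yields
\[
\e{R}{V \setminus P_i} \leq (k-1)\,\e{R}{B_i}.
\]
Substituting the identity for $\e{R}{B_i}$ and combining with the decomposition of the cut:
\[
\e{P_i}{V \setminus P_i} \leq \e{B_i}{V \setminus P_i} + (k-1)\bigl(\phi(B_i)\vol(B_i) - \e{B_i}{V \setminus P_i}\bigr) = (k-1)\phi(B_i)\vol(B_i) - (k-2)\e{B_i}{V \setminus P_i}.
\]
For $k \geq 2$ the last term is nonpositive, so $\e{P_i}{V\setminus P_i} \leq (k-1)\phi(B_i)\vol(B_i)$; since $\vol(P_i) \geq \vol(B_i)$, dividing gives $\phi(P_i) \leq (k-1)\phi(B_i) \leq k\,\phi(B_i)$. (The case $k=1$ forces $\e{R}{V\setminus P_i} = 0$, and the same bookkeeping trivially yields $\phi(P_i) \leq \phi(B_i)$.)

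There is no real obstacle here; the only subtle point is making sure the accounting of edges between $B_i$, $R$, and $V \setminus P_i$ is consistent, so that the identity $\phi(B_i)\vol(B_i) = \e{B_i}{V\setminus P_i} + \e{R}{B_i}$ is applied correctly and the hypothesis on $R$ is invoked only on edges leaving $R$ (not on edges leaving $P_i$).
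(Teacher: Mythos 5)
Your proof is correct and takes essentially the same route as the paper's: decompose the cut of $P_i$ into edges leaving $B_i$ and edges leaving $P_i\setminus B_i$, use the hypothesis to get $\e{P_i\setminus B_i}{V\setminus P_i}\leq (k-1)\e{P_i\setminus B_i}{B_i}$, and divide by $\vol(P_i)\geq\vol(B_i)$. Your exact accounting via $\e{B_i}{V\setminus B_i}=\e{B_i}{V\setminus P_i}+\e{P_i\setminus B_i}{B_i}$ even gives the marginally sharper constant $(k-1)\phi(B_i)$, whereas the paper simply bounds $\e{B_i}{P_i\setminus B_i}$ by the full boundary $\e{B_i}{V}$ and settles for $k\phi(B_i)$.
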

\begin{proof}
 Let $S=P_i\setminus B_i$. 
Therefore,
\begin{eqnarray*} \phi(P_i) = \frac{\e{P_i}{V}}{\vol(P_i)} &\leq& \frac{\e{B_i}{V} + \e{S}{V\setminus P_i} - \e{S}{B_i} }{\vol(B_i)}\\
&\leq & \phi(B_i) + \frac{(k-1) \e{B_i}{S}}{\vol(B_i)} \leq k\phi(B_i).
\end{eqnarray*}
The second inequality uses conclusion 2 of \autoref{fact:algprop}.
\end{proof}

It remains to lower-bound the inside conductance of each  $P_i$. This is proved in the next lemma.
For a $S\subseteq P_i$ we use the following notations in the next lemma (see \autoref{fig:PiTi} for an illustration). 
\begin{equation*}
 \begin{array}{lcl}
 S_B:=B_i \cap S, &  & {\bS}_B := B_i \cap \bS,\\
 S_P:=S \setminus B_i, &  &  {\bS}_P:=\bS\setminus B_i. 
 \end{array}
 \end{equation*}

 \begin{figure}
\begin{center}
\begin{tikzpicture}[inner sep=1.8pt,scale=.8,pre/.style={<-,shorten <=2pt,>=stealth,thick}, 
post/.style={->,shorten >=1pt,>=stealth,thick}]
\draw (0,0) circle (2);
\draw [color=red,line width=1.3pt] (0,2) -- (0,-2) arc (-90:90:2cm);
\draw [color=blue,line width=1.3pt] (2,0) -- (-2,0) arc (180:0:2cm);
\draw  (0.8,0.8) node {$S_B$};
\draw  (-0.8,0.8) node {$\bS_B$};
\draw  (0.8,-0.8) node {$S_P$};
\draw  (-0.8,-0.8) node {$\bS_P$};
\draw [blue] (0,2.5) node {$B_i$};
\draw [red] (2.5,0) node {$S$};
\end{tikzpicture}
\caption{The circle represents $P_i$, the top (blue) semi-circle represents $B_i$ and the right (red) semi-circle represents the set $S$.}
 \label{fig:PiTi}
\end{center}
\end{figure}
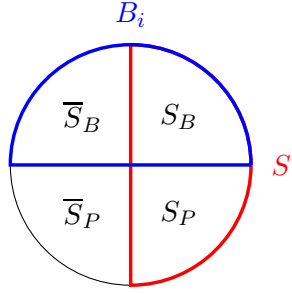

\begin{lemma}
\label{lem:Piexpander}
Let $B_i\subseteq P_i\subseteq V$ and let $S\subseteq P_i$ such that $\vol(S_B)\leq \vol(B_i)/2$.
Let $\rho\leq \phi(S_P)$ and $\rho\leq \max\{\phi(S_B), \phi(\bS_B))\}$ and $0<\eps<1$. If the following conditions hold then
$ \phi(S)\geq \eps\cdot \rho/14k.$
\begin{enumerate}[1)]
\item If $S_P\neq \emptyset$, then $\e{S_P}{P_i} \geq \e{S_P}{V}/k$,
\item If $S_B\neq \emptyset$, then $\inside(S_B,B_i)\geq  \eps/3$ and $\inside(\bS_B,B_i)\geq \eps/3$. 
\end{enumerate}
\end{lemma}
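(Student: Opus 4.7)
The plan is to lower bound $\phi_{G[P_i]}(S) = \e{S}{\bS}/\vol_{G[P_i]}(S)$. For the denominator I use the trivial bound $\vol_{G[P_i]}(S) \le \vol_G(S) = \vol(S_B) + \vol(S_P)$, and for the numerator I exploit two decompositions of the cut in $G[P_i]$: first, $\e{S}{\bS} \ge \e{S_B}{\bS_B}$ (the portion of the cut inside $B_i$); second, $\e{S}{\bS} \ge \e{S_P}{\bS} = \e{S_P}{P_i} - \e{S_P}{S_B}$, where the identity follows because $P_i \setminus S_P = S_B \cup \bS$.

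The two degenerate cases are immediate. If $S_B = \emptyset$, hypothesis 1 gives $\e{S}{\bS} = \e{S_P}{P_i} \ge \e{S_P}{V}/k \ge \rho\vol(S_P)/k$, so $\phi_{G[P_i]}(S) \ge \rho/k$. If $S_P = \emptyset$, hypothesis 2 together with $\vol(S_B) \le \vol(B_i)/2$ let me invoke \autoref{lem:Tiexpander}, yielding $\e{S_B}{\bS_B}/\vol(S_B) \ge \eps\rho/7$ and hence $\phi_{G[P_i]}(S) \ge \eps\rho/7$; both are stronger than the target $\eps\rho/(14k)$.

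For the main case with both $S_B$ and $S_P$ non-empty, I would split into three sub-cases, using thresholds $\alpha \in (0,1)$ on the ratio $\vol(S_B)/\vol_G(S)$ and $\beta \in (0,1)$ on $\e{S_P}{S_B}/\e{S_P}{P_i}$, later chosen to balance the three resulting bounds. In sub-case (i), if $\vol(S_B) \ge \alpha \vol_G(S)$, \autoref{lem:Tiexpander} applied to $S_B$ inside $B_i$ gives $\e{S}{\bS} \ge \e{S_B}{\bS_B} \ge \eps\rho\vol(S_B)/7 \ge \eps\alpha\rho\vol_G(S)/7$. In sub-case (ii), if $\vol(S_P) > (1-\alpha)\vol_G(S)$ and $\e{S_P}{S_B} \le \beta\,\e{S_P}{P_i}$, then $\e{S}{\bS} \ge (1-\beta)\e{S_P}{P_i} \ge (1-\beta)(1-\alpha)\rho\vol_G(S)/k$. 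In sub-case (iii), if $\vol(S_P) > (1-\alpha)\vol_G(S)$ and $\e{S_P}{S_B} > \beta\,\e{S_P}{P_i}$, then $\e{S_B}{V\setminus B_i} \ge \e{S_B}{S_P} > \beta\rho\vol(S_P)/k$, and combining this with $\inside(S_B,B_i) \ge \eps/3$ and $\vol(S_B) \le \vol(B_i)/2$ (which implies $\vol(B_i-S_B)/\vol(B_i) \ge 1/2$) yields $\e{S_B}{\bS_B} \ge (\eps/6)\e{S_B}{V\setminus B_i} > \eps\beta(1-\alpha)\rho\vol_G(S)/(6k)$. Choosing $\beta = 6/(6+\eps)$ equalizes sub-cases (ii) and (iii), and then $\alpha = 7/((6+\eps)k+7)$ balances these against (i), producing the uniform bound $\eps\rho/((6+\eps)k+7) \ge \eps\rho/(14k)$.

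The main obstacle is sub-case (iii): many edges from $S_P$ are ``absorbed'' by $S_B$ and thus do not contribute to the cut $\e{S_P}{\bS}$. The key insight is that absorbed edges cross the boundary of $B_i$ in $G$, so the local expansion property $\inside(S_B, B_i) \ge \eps/3$ guaranteed by \autoref{lem:findTi} forces this outgoing mass from $S_B$ to be matched, up to the factor $\eps/6$, by edges from $S_B$ to $\bS_B$ inside $B_i$. This recovers enough cut mass internal to $B_i$ to carry the bound through even when $\e{S_P}{\bS}$ itself is small, and it is the only place in the argument where the $\inside$ property (rather than just the smallness of $\phi_G(B_i)$) is essential.
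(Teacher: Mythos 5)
Your proof is correct and takes essentially the same route as the paper's: the same $S_B,S_P$ decomposition, \autoref{lem:Tiexpander} for the cut mass inside $B_i$, and the recovery of the edges absorbed by $S_B$ through $\inside(S_B,B_i)\geq \eps/3$ together with $\vol(S_B)\leq\vol(B_i)/2$ (the $\eps/6$ factor) combined with hypothesis (1). The only difference is presentational: the paper splits solely on $\vol(S_B)$ versus $\vol(S_P)$ and adds the direct contribution $\e{S_P}{P_i\setminus S}$ to the recovered contribution $\frac{\eps}{6}\e{S_B}{S_P}$ in one chain, while you introduce the thresholds $\alpha,\beta$ and optimize them; both give a bound of at least $\eps\rho/(14k)$.
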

\begin{proof}

We consider 2 cases.
\begin{description} 
\item [Case 1:] {\bf $\vol(S_B)\geq \vol(S_P):$} Because of assumption (2) and $\vol(S_B)\leq \vol(B_i)/2$ we can apply \autoref{lem:Tiexpander}, and we obtain
\begin{eqnarray*}
\phi_{G[P_i]}(S) \geq \frac{\e{S}{P_i}}{\vol(S)}  \geq \frac{\e{S_B}{B_i}}{2\vol(S_B)} \geq \frac{\eps\cdot \max\{\phi(S_B),\phi(\bS_B)\}}{14} \geq \frac{\eps \cdot \rho}{14}.
\end{eqnarray*}

\item [Case 2:] {\bf $\vol(S_P)\geq \vol(S_B):$} 
\begin{eqnarray*}
\phi_{G[P_i]}(S) \geq \frac{\e{S}{P_i}}{\vol(S)} &\geq& \frac{\e{S_P}{P_i\setminus S}+ \e{S_B}{B_i}}{2\vol(S_P)} \\
&\geq & \frac{\e{S_P}{P_i\setminus S}+\eps\cdot \e{S_B}{S_P}/6}{2\vol(S_P)} \\
&\geq & \frac{\eps\cdot \e{S_P}{P_i}}{12\vol(S_P)}\\
&\geq & \frac{\eps\cdot \e{S_P}{ V}}{12k\vol(S_P)}\\
& \geq& \eps\cdot \phi(S_P)/12k \geq \eps\cdot \rho_G(k+1)/12k.
\end{eqnarray*}
where the third inequality follows by the assumption that $\inside(S_B,B_i) \geq \eps/3$ and $\vol(S_B)\leq \vol(B_i)/2$, 
and the fifth inequality follows by assumption (1). 
\qedhere
\end{description}
\end{proof}
Let $B_1,\ldots,B_k$ be the sets constructed in
\autoref{lem:findTi} and $P_1,\ldots,P_k$ the sets constructed in \autoref{alg:expanderpartition},
First, observe that we can let $\rho=\rho(k+1)$. This is because among the $k+1$ disjoint sets
$\{B_1,\ldots,B_{i-1}, S_B, \bS_B,B_{i+1}, B_k\}$ there is a set of conductance $\rho(k+1)$. 
Similarly, among the sets  $\{B_1,B_2,\ldots,B_k,S_P\}$ there is a set of conductance $\rho(k+1)$. Since for all $1\leq i\leq k$,  $\phi(B_i)<\rho(k+1)$, we have $\max\{\phi(S_B),\phi(\bS_B)\} \geq \rho(k+1)$ and
$\phi(P_S)\geq \rho(k+1)$. 
Therefore, by the above lemma,
 \begin{equation*}
\phi(G[P_i]) = \min_{S\subset P_i} \max\{ \phi_{G[P_i]}(S), \phi_{G[P_i]}(P_i\setminus S)\} \geq \eps \cdot \rho(k+1)/14k.
\end{equation*}
This completes the proof of part (ii) of \autoref{thm:kexpander}.

\section{Proof of Algorithmic Theorem}
In this section we prove \autoref{thm:expanderalgorithm}. 
Let 
\begin{equation}
\label{eq:rhosdef}
\rho^* := \min\{\lambda_k/10, 30 c_0 k^5 \sqrt{\lambda_{k-1}}\}.
\end{equation}
where $c_0$ is the constant defined in \autoref{thm:disjointsuppfunctions}.
We use the notation $\phiin := \lambda_k/140k^2$ and $\phiout :=90 c_0 \cdot k^6\sqrt{\lambda_{k-1}}$. 

The idea of the algorithm is simple: we start with one partitioning of $G$, $P_1=B_1=V$.
Each time we try to find a set $S$ of small conductance in one $P_i$. Then, either we can use
$S$  to introduce a new set $B_{\ell+1}$ of small conductance, i.e., $\phi(B_{\ell+1}) \leq 4 \rho^*$, or we can
improve the current $\ell$-partitioning by refining $B_i$ to one of its subsets (similar to \autoref{alg:findTi}) or by moving parts of $P_i$ to the other sets $P_j$ (similar to \autoref{alg:expanderpartition}).


The details of our polynomial time algorithm are described in \autoref{alg:kexpander}. 
Our algorithm  is  a simple local search designed based on \autoref{alg:findTi} and \autoref{alg:expanderpartition}.

\begin{algorithm}[htb]
\begin{algorithmic}[1]
\Input $k>1$ such that $\lambda_k>0$. 
\Output A $(\phiin^2/4,\phiout)$ $\ell$-partitioning of $G$ for some $1\leq \ell<k$.
\State Let  $\ell=1$, $P_1=B_1=V$.
\While {$\exists\ 1\leq i\leq \ell$ such that $\e{P_i\setminus B_i}{B_i} < \e{P_i\setminus B_i}{P_j}$ for $j\neq i$, or Spectral Partitioning finds $S\subseteq P_i$ s.t.
$\phi_{G[P_i]}(S), \phi_{G[P_i]}(P_i\setminus S) < \phiin$ \label{step:partitionwhile}} 
\State Assume (after renaming) $\vol(S\cap B_i) \leq \vol(B_i)/2$.  
\State Let $S_B = S\cap B_i, {\bS}_B = B_i \cap \overline{S}, S_P=(P_i\setminus B_i)\cap S$ and 
${\bS}_P = (P_i\setminus B_i)\cap \bS$ (see \autoref{fig:PiTi}).
\If { $\max\{\phi(S_B), \phi(\bS_B)\} \leq (1+1/k)^{\ell+1} \rho^*$} \label{step:ifmaxSB}
\State Let $B_i=S_B, P_{\ell+1}=B_{\ell+1}=\bS_B$ and $P_i=P_i\setminus \bS_B$. Set $\ell\leftarrow \ell+1$
and {\bf goto} step \ref{step:partitionwhile}. \label{step:addBlB}
\EndIf
\If { $\max\{\inside(S_B, B_i), \inside(\bS_B,B_i)\} \leq 1/3k$, \label{step:ifchangeTi}}
\State Update $B_i$ to either of $S_B$ or $\bS_B$ with the smallest conductance, and {\bf goto}
step \ref{step:partitionwhile}. \label{step:changeTi}
\EndIf
\If {$\phi(S_P) \leq (1+1/k)^{\ell+1} \rho^* $} \label{step:ifmaxSP}
\State Let $P_{\ell+1}=B_{\ell+1}=S_P, P_i=P_i\setminus S_P$. Set $\ell\leftarrow \ell+1$ and {\bf goto} step \ref{step:partitionwhile}. \label{step:addBlP}
\EndIf 
\If { $\e{P_i\setminus B_i}{P_i} < \e{P_i\setminus B_i}{B_j}$ for $j\neq i$, \label{step:ifPimTi} }
\State Update  $P_j=P_j\cup (P_i\setminus B_i)$, and let $P_i=B_i$ and {\bf goto} step \ref{step:partitionwhile}.
\EndIf
\If { $\e{S_P}{P_i} < \e{S_P}{P_j}$ for $j\neq i$, \label{step:ifPSPi}}
\State Update $P_i = P_i-S_P$ and merge $S_P$ with $\argmax_{P_j}\e{S_P}{P_j}$. 
\EndIf
\EndWhile
\Return $P_1,\ldots,P_\ell$.
\end{algorithmic}
\caption{A polynomial time algorithm for partitioning $G$ into $k$ expanders}
\label{alg:kexpander}
\end{algorithm}

Observe that in the entire run of the algorithm $B_1,\ldots,B_\ell$ are always disjoint, $B_i\subseteq P_i$ and $P_1,\ldots,P_\ell$ form an $\ell$-partitioning of $V$.
We prove  \autoref{alg:kexpander} by a sequence of steps.
\begin{claim}
\label{cl:Biupperbound}
Throughout the algorithm we always have
$$ \max_{1\leq i\leq \ell} \phi(B_i) \leq \rho^* (1+1/k)^{\ell}.$$
\end{claim}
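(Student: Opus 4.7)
The plan is to prove the invariant by induction on the number of iterations of the while loop. At initialization, $\ell=1$ and $B_1=V$, so $\phi(B_1)=0\leq (1+1/k)\rho^*$. For the inductive step, I will walk through the five substeps inside the loop and check that each preserves the invariant with the (possibly updated) value of $\ell$.

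Steps \ref{step:ifPimTi} and \ref{step:ifPSPi} only rearrange vertices among the $P_j$'s and do not touch any $B_i$ nor $\ell$, so the invariant carries over trivially. Steps \ref{step:addBlB} and \ref{step:addBlP} each create a new set $B_{\ell+1}$ (and in the case of \ref{step:addBlB} also shrink $B_i$ to $S_B$), but only after explicitly checking that the candidate sets have conductance at most $(1+1/k)^{\ell+1}\rho^*$. Since $\ell$ is incremented simultaneously, the new $B_{\ell+1}$ (and new $B_i=S_B$) satisfy the invariant at the new value of $\ell$, while the previously existing $B_j$'s still satisfy $\phi(B_j)\leq (1+1/k)^{\ell}\rho^*\leq (1+1/k)^{\ell+1}\rho^*$.

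The only subtle case is step \ref{step:changeTi}, where $B_i$ is replaced by whichever of $S_B,\bS_B$ has smaller conductance while $\ell$ is unchanged, so I must actually show that this replacement does not increase $\phi(B_i)$ beyond the bound $(1+1/k)^{\ell}\rho^*$. The key observation is that reaching step \ref{step:ifchangeTi} means the condition of step \ref{step:ifmaxSB} has failed, so
\[
\max\{\phi(S_B),\phi(\bS_B)\} > (1+1/k)^{\ell+1}\rho^* \geq (1+1/k)\phi(B_i),
\]
where the last inequality is the inductive hypothesis. Combined with the defining hypothesis of step \ref{step:ifchangeTi} that $\inside(S_B,B_i)\leq 1/(3k)$, this is exactly the setup of \autoref{lem:minST} with $\eps=1/k$, which yields $\min\{\phi(S_B),\phi(\bS_B)\}\leq \phi(B_i)$. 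Hence the updated $B_i$ satisfies $\phi(B_i)\leq (1+1/k)^{\ell}\rho^*$ and the invariant is preserved.

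The main (and essentially only) obstacle is recognizing that the multiplicative schedule $(1+1/k)^{\ell+1}\rho^*$ used in the gating conditions of steps \ref{step:ifmaxSB} and \ref{step:ifmaxSP} is precisely calibrated to the $(1+\eps)$-gap requirement of \autoref{lem:minST} with $\eps=1/k$: the contrapositive of the gate condition supplies exactly the hypothesis that the refinement step needs. Once this alignment is noticed, the remainder of the induction is bookkeeping.
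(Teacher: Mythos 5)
Your proof is correct and follows essentially the same route as the paper: induction over loop iterations, with the only nontrivial case being step \ref{step:changeTi}, handled by combining the failed gate of step \ref{step:ifmaxSB} with the inductive hypothesis and invoking \autoref{lem:minST} with $\eps=1/k$. The paper's own argument is identical in structure, so no further comparison is needed.
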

\begin{proof} We prove the claim inductively. By definition, at the beginning $\phi(B_1)=0$. 
In each iteration of the algorithm, $B_1,\ldots,B_\ell$ only change in steps \ref{step:addBlB},\ref{step:changeTi} and \ref{step:addBlP}. 
It is straightforward that by executing either of steps \ref{step:addBlB} and \ref{step:addBlP} we  satisfy induction claim, i.e., we obtain $\ell+1$ sets $B_1,\ldots,B_{\ell+1}$ such that  for all $1\leq i\leq \ell+1$,
$$ \phi(B_i)\leq \rho^* (1+1/k)^{\ell+1}.$$

On the other hand, if step \ref{step:changeTi} is executed, then the condition
of \ref{step:ifmaxSB} is not satisfied, i.e., 
$$ \max\{\phi(S_B),\phi(\bS_B)\} > (1+1/k)^{\ell+1}\rho^* \geq (1+1/k) \phi(B_i).$$
where the last inequality follows by the induction hypothesis.
Since $\min\{\inside(S_B,B_i),\inside(\bS_B,B_i)\} \leq 1/3k$ for $\eps=1/k$
by \autoref{lem:minST} we get
$$ \min\{\phi(S_B), \phi(\bS_B)\} \leq \phi(B_i) \leq (1+1/k)^{\ell}\rho^*,$$
which completes the proof.
\end{proof} 
\begin{claim}
\label{cl:ellupperbound}
In the entire run of the algorithm we have $\ell < k$.
\end{claim}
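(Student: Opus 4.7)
The plan is a short proof by contradiction using Claim~\ref{cl:Biupperbound} together with the higher order Cheeger inequality (Theorem~\ref{thm:higherordercheeger}).

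Suppose for contradiction that at some point during the execution of \autoref{alg:kexpander} we have $\ell = k$. Throughout the run of the algorithm the sets $B_1,\ldots,B_\ell$ are maintained disjoint (new $B_{\ell+1}$'s are only added in steps~\ref{step:addBlB} and \ref{step:addBlP}, and in both cases the new set is carved out of $P_i\setminus B_i$ which is disjoint from $B_1,\ldots,B_\ell$). Hence at the moment $\ell = k$ we have $k$ pairwise disjoint subsets $B_1,\ldots,B_k$ of $V$.

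By \autoref{cl:Biupperbound} we have
$$ \max_{1\leq i\leq k} \phi(B_i) \leq \rho^* \Bigl(1+\tfrac{1}{k}\Bigr)^{k} \leq e\cdot \rho^*.$$
By definition of $\rho^*$ in \eqref{eq:rhosdef}, $\rho^* \leq \lambda_k/10$, so $\max_i \phi(B_i) \leq e\lambda_k/10 < \lambda_k/2$. On the other hand, by the definition of $\rho(k)$ as the minimum over all choices of $k$ disjoint subsets of the maximum conductance, the disjoint sets $B_1,\ldots,B_k$ witness
$$ \rho(k) \leq \max_{1\leq i\leq k}\phi(B_i) < \lambda_k/2.$$
This contradicts the lower bound $\rho(k)\geq \lambda_k/2$ from Theorem~\ref{thm:higherordercheeger}. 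Therefore $\ell < k$ at every step of the algorithm.

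The only mildly delicate point in the write-up is verifying that the $B_i$'s truly remain disjoint (so that they are a valid witness for $\rho(k)$) and that every update step preserves the invariant in \autoref{cl:Biupperbound}; both of these follow from an inspection of steps~\ref{step:addBlB}, \ref{step:changeTi}, and \ref{step:addBlP}, which is the content of \autoref{cl:Biupperbound} itself. No other step of the algorithm modifies $\ell$ or any $B_i$, so the contradiction above suffices.
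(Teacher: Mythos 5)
Your proof is correct and follows exactly the paper's argument: invoke Claim~\ref{cl:Biupperbound} to bound $\max_i\phi(B_i)\leq e\rho^*\leq e\lambda_k/10<\lambda_k/2$ when $\ell=k$, and derive a contradiction with $\rho(k)\geq\lambda_k/2$ from \autoref{thm:higherordercheeger}. The only difference is that you also spell out the disjointness of the $B_i$'s, which the paper takes as an observed invariant.
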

\begin{proof}
The follows from the previous claim. If $\ell=k$, then by previous claim we have
disjoint sets $B_1,\ldots,B_k$ such that
$$ \max_{1\leq i\leq k} \phi(B_i) \leq \rho^*(1+1/k)^k \leq e\cdot \rho^*\leq e\lambda_k/10  < \lambda_k/2.$$
where we used \eqref{eq:rhosdef}. But, the above inequality implies $\rho(k) < \lambda_k/2$ which
contradicts \autoref{thm:higherordercheeger}.
\end{proof}
\begin{claim}
\label{cl:algoutput}
If the algorithm terminates, then it returns a $\ell$-partitioning of $V$ that is a 
$(\phiin^2/4, \phiout)$-clustering.
\end{claim}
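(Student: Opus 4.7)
The plan is to directly unpack the termination condition of the while loop and verify that each of its two failure clauses implies one of the two required bounds. The algorithm's invariants, maintained at every branch, are that $P_1,\ldots,P_\ell$ always partition $V$ and $B_i \subseteq P_i$, so on termination we certainly have an $\ell$-partitioning. What remains is to show $\phi(P_i) \leq \phiout$ and $\phi(G[P_i]) \geq \phiin^2/4$ for each $i$. The loop exits precisely when, for every $i$, neither (a) there is some $j \neq i$ with $\e{P_i\setminus B_i}{B_i} < \e{P_i\setminus B_i}{P_j}$, nor (b) Spectral Partitioning on $G[P_i]$ returns a set $S$ with $\phi_{G[P_i]}(S), \phi_{G[P_i]}(P_i\setminus S) < \phiin$.

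For the outside conductance I would leverage the failure of clause (a). Since the $P_j$'s partition $V$, summing over $j$ gives $\e{P_i\setminus B_i}{V} = \e{P_i\setminus B_i}{B_i} + \sum_{j \neq i}\e{P_i\setminus B_i}{P_j}$, and the failure of (a) bounds each summand by $\e{P_i\setminus B_i}{B_i}$. Thus $\e{P_i\setminus B_i}{V} \leq \ell\, \e{P_i\setminus B_i}{B_i}$, and because $\ell < k$ by \autoref{cl:ellupperbound}, we get $\e{P_i\setminus B_i}{B_i} \geq \e{P_i\setminus B_i}{V}/k$. This is exactly the hypothesis of \autoref{lem:condPi}, yielding $\phi(P_i) \leq k\phi(B_i)$. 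Invoking \autoref{cl:Biupperbound} together with $\ell < k$ produces $\phi(B_i) \leq \rho^*(1+1/k)^\ell \leq e\rho^*$, and substituting $\rho^* \leq 30 c_0 k^5 \sqrt{\lambda_{k-1}}$ from \eqref{eq:rhosdef} gives $\phi(P_i) \leq 30 e c_0 k^6 \sqrt{\lambda_{k-1}} < 90 c_0 k^6 \sqrt{\lambda_{k-1}} = \phiout$, since $30e < 90$.

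For the inside conductance I would argue by contrapositive from Cheeger's inequality (\autoref{thm:cheegerinequality}) applied inside $G[P_i]$. Suppose for contradiction that $\phi(G[P_i]) < \phiin^2/4$. Then Spectral Partitioning on $G[P_i]$ returns a set $S \subseteq P_i$ with $\vol_{G[P_i]}(S) \leq \vol_{G[P_i]}(P_i)/2$ and $\phi_{G[P_i]}(S) \leq \sqrt{4\phi(G[P_i])} < \phiin$. The volume comparison $\vol_{G[P_i]}(P_i\setminus S) \geq \vol_{G[P_i]}(S)$ immediately gives $\phi_{G[P_i]}(P_i\setminus S) \leq \phi_{G[P_i]}(S) < \phiin$, so clause (b) would trigger, contradicting termination. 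Hence $\phi(G[P_i]) \geq \phiin^2/4$, and the returned partition is a $(\phiin^2/4,\phiout)$-clustering.

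There is no substantial obstacle here: the proof is essentially an assembly of previously established pieces (\autoref{lem:condPi}, \autoref{cl:Biupperbound}, \autoref{cl:ellupperbound}, and Cheeger), and the whole content of the claim is that the local-search stopping rule precisely matches the two target inequalities. The only point requiring mild care is the transfer of the conductance bound from $S$ to $P_i\setminus S$, which relies on Spectral Partitioning returning the half of smaller volume so that the same cut weight divided by the larger volume can only be smaller.
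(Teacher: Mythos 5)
Your proof is correct and follows essentially the same route as the paper: the failure of the merge condition yields the hypothesis of \autoref{lem:condPi}, which combined with \autoref{cl:Biupperbound}, \autoref{cl:ellupperbound} and \eqref{eq:rhosdef} gives $\phi(P_i)\leq\phiout$, while the failure of the Spectral Partitioning condition together with \autoref{thm:cheegerinequality} gives $\phi(G[P_i])\geq\phiin^2/4$. Your version merely makes explicit two steps the paper leaves implicit (summing over $j$ to get the $1/\ell$ factor, and transferring the conductance bound from $S$ to $P_i\setminus S$ via the smaller-volume side), which is fine.
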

\begin{proof}
Suppose the algorithm terminates with sets  $B_1,\ldots,B_ \ell$ and $P_1,\ldots,P_\ell$.
Since by the loop condition, for each $1\leq i\leq \ell$, 
$$\e{P_i\setminus B_i}{B_i} \geq \e{P_i\setminus B_i}{V}/\ell,$$ 
by \autoref{lem:condPi},  
$$\phi(P_i)\leq  \ell \phi(B_i) \leq \ell \cdot e\cdot \rho^* \leq 90 c_0 \cdot  k^6 \sqrt{\lambda_{k-1}}.$$
where the second inequality follows by \autoref{cl:Biupperbound}, and the last inequality follows by \autoref{cl:ellupperbound} and \eqref{eq:rhosdef}.

On the other hand, by the condition of the loop and the performance of Spectral Partitioning algorithm as described in \autoref{thm:cheegerinequality}, for each
$1\leq i\leq k$,
$$ \phi(G[P_i]) \geq \phiin^2/4 = \Omega( \lambda_k^2/k^4). $$
\end{proof}

It remains to show that the algorithm indeed terminates. First,  we show that in each iteration of the loop at least one of the conditions
are satisfied.
\begin{claim}
In each iteration of the loop at least one of the conditions hold.
\end{claim}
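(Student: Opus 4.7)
The plan is to argue by contradiction. Suppose at some iteration none of the if conditions in steps \ref{step:ifmaxSB}, \ref{step:ifchangeTi}, \ref{step:ifmaxSP}, \ref{step:ifPimTi}, and \ref{step:ifPSPi} trigger; I will show that neither disjunct of the {\bf while} condition can be true, contradicting the fact that the loop body is being executed.

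First I handle the main case: the loop is entered because Spectral Partitioning finds $S\subseteq P_i$ with $\phi_{G[P_i]}(S),\phi_{G[P_i]}(P_i\setminus S)<\phiin$. I will apply \autoref{lem:Piexpander} with $\eps=1/k$ to this $S$ and derive $\phi_{G[P_i]}(S)\geq \phiin$, the desired contradiction. The hypotheses of the lemma are supplied by the failures of the if conditions: failure of step \ref{step:ifchangeTi} gives both $\inside(S_B,B_i)$ and $\inside(\bS_B,B_i)$ at least $1/(3k)=\eps/3$; failure of step \ref{step:ifPSPi}, after summing the inequalities $\e{S_P}{P_i}\geq \e{S_P}{P_j}$ over the at most $\ell-1<k$ indices $j\neq i$, gives $\e{S_P}{P_i}\geq \e{S_P}{V}/k$; and failure of steps \ref{step:ifmaxSB} and \ref{step:ifmaxSP} yields both $\max\{\phi(S_B),\phi(\bS_B)\}$ and $\phi(S_P)$ exceeding $(1+1/k)^{\ell+1}\rho^*\geq \rho^*$, allowing me to take $\rho=\rho^*$ in the lemma. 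The resulting bound $\phi_{G[P_i]}(S)\geq \rho^*/(14k^2)$ then meets or exceeds $\phiin=\lambda_k/(140k^2)$ by the choice of $\rho^*$ in \eqref{eq:rhosdef}.

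Second, if the loop is entered only because of the first disjunct $\e{P_i\setminus B_i}{B_i}<\e{P_i\setminus B_i}{P_j}$, I instead view the loop body with $S=P_i\setminus B_i$ (so $S_P=P_i\setminus B_i$ and $S_B=\emptyset$); then the test in step \ref{step:ifPSPi} becomes precisely this first disjunct and must trigger.

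The main obstacle is checking the numerics so that $\rho^*/(14k^2)\geq \phiin$. This is where both terms in the definition of $\rho^*$ matter: the term $\lambda_k/10$ directly yields the inside bound $\phiin$, while the companion term $30c_0 k^5\sqrt{\lambda_{k-1}}$ is what ultimately drives the outside conductance bound $\phiout$ via \autoref{cl:Biupperbound} and \autoref{lem:condPi}. A secondary subtlety is handling the degenerate cases $S_B=\emptyset$ or $S_P=\emptyset$, in which only the relevant one of the two hypotheses of \autoref{lem:Piexpander} needs to be verified.
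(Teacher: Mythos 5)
Your setup of the reduction to \autoref{lem:Piexpander} is fine (the failed tests in steps \ref{step:ifchangeTi} and \ref{step:ifPSPi} supply hypotheses (2) and (1) with $\eps=1/k$, and your handling of the first disjunct of the while condition via $S=P_i\setminus B_i$ is correct), but the final numerical step is a genuine gap. You assert that $\phi_{G[P_i]}(S)\geq \rho^*/(14k^2)$ ``meets or exceeds $\phiin=\lambda_k/(140k^2)$ by the choice of $\rho^*$.'' That would require $\rho^*\geq\lambda_k/10$, whereas $\rho^*$ is defined in \eqref{eq:rhosdef} as a \emph{minimum} and can equal $30c_0k^5\sqrt{\lambda_{k-1}}$, which is arbitrarily smaller than $\lambda_k/10$ whenever $\lambda_{k-1}\ll\lambda_k^2$ (e.g.\ $\lambda_{k-1}=0<\lambda_k$ when $G$ has exactly $k-1$ connected components). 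In that regime $\rho^*/(14k^2)$ is far below $\phiin$, so no contradiction with $\phi_{G[P_i]}(S)<\phiin$ is obtained, and your argument stalls exactly where the two-term definition of $\rho^*$ matters; observing that the second term ``drives $\phiout$'' elsewhere does not repair this.

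The paper closes this hole with two ingredients you never invoke. First, it takes $\rho=\max\{\rho^*,\rho(\ell+1)\}$ rather than $\rho^*$: by \autoref{cl:Biupperbound} and the failure of steps \ref{step:ifmaxSB} and \ref{step:ifmaxSP}, the quantities $\phi(S_B),\phi(\bS_B),\phi(S_P)$ dominate every $\phi(B_j)$, and among $\ell+1$ disjoint sets one has conductance at least $\rho(\ell+1)$, so the lower bound becomes $\phi(S)\geq\max\{\rho^*,\rho(\ell+1)\}/(14k^2)$. When $\ell=k-1$ this already finishes, since $\rho(k)\geq\lambda_k/2$ by \autoref{thm:higherordercheeger} gives $\phi(S)\geq\lambda_k/(28k^2)\geq\phiin$. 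Second, when $\ell<k-1$, the paper applies \autoref{lem:disjointsets} together with Cheeger's inequality to get the \emph{upper} bound $\phi(S)\leq\sqrt{4c_0k^6\lambda_{k-1}}$; combined with the lower bound this yields $\rho^*\leq 28\sqrt{c_0}\,k^5\sqrt{\lambda_{k-1}}<30c_0k^5\sqrt{\lambda_{k-1}}$, so the minimum in \eqref{eq:rhosdef} must be attained by the first term, i.e.\ $\rho^*=\lambda_k/10$, and only then does $\phi(S)\geq\lambda_k/(140k^2)=\phiin$ follow. Without $\rho(\ell+1)$, \autoref{thm:higherordercheeger}, and \autoref{lem:disjointsets}, the case $30c_0k^5\sqrt{\lambda_{k-1}}<\lambda_k/10$ is left uncovered, so your proof as written fails precisely there.
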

\begin{proof}
We use \autoref{lem:Piexpander} to show that if none of the conditions in the loop are satisfied then $\phi(S) \geq \phiin$ which is a contradiction. So, for the sake of contradiction assume in an iteration of the loop none of the conditions hold.


First, since conditions of \ref{step:ifchangeTi} and \ref{step:ifPSPi} do not hold, for $\eps=1/k$ assumptions (1) and (2) of \autoref{lem:Piexpander} are satisfied.
Furthermore, since condition of steps \ref{step:ifmaxSB} and \ref{step:ifmaxSP} do not hold
\begin{align*} 
\max\{\phi(S_B,\bS_B)\} &= \max\{\phi(B_1),\ldots,\phi(B_{i-1}),\phi(S_B),\phi(\bS_B),\phi(B_{i+1},\ldots,\phi(B_\ell)\} \geq \max\{\rho^*,\rho(\ell+1)\}. \\
\phi(S_P) &= \max\{\phi(B_1),\ldots,,\ldots,\phi(B_\ell), \phi(S_P)\} \geq \max\{\rho^*,\rho(\ell+1)\}. 
\end{align*}
where we used \autoref{cl:Biupperbound}.
So, for $\rho =\rho^*$  and $\eps=1/k$ by \autoref{lem:Piexpander} we get
\begin{equation}
\label{eq:rholowerbound}
 \phi(S) \geq \frac{\eps\cdot \rho}{14 k} = \frac{\max\{\rho^*,\rho(\ell+1)\}}{14k^2}. 
 \end{equation} 
 Now, if $\ell=k-1$, then by \autoref{thm:higherordercheeger}
 we get
 $$ \phi(S) \geq \frac{\rho(k)}{14k^2} \geq \frac{\lambda_k}{28k^2} \geq \phiin,$$
 which is a contradiction and we are done.
 Otherwise,  we must have $\ell<k-1$. Then by \autoref{lem:disjointsets}, 
\begin{equation}
\label{eq:phisupperbound}
 \phi(S) \leq \min_{1\leq i\leq \ell} \sqrt{2\lambda_2(G[P_i])} \leq \sqrt{4c_0 k^6\lambda_{k-1}}, 
 \end{equation}
where the first inequality follows by the Cheeger's inequality (\autoref{thm:cheegerinequality}),
Putting \eqref{eq:rholowerbound} and \eqref{eq:phisupperbound} together we have
$$ \rho^* \leq 14k^2\sqrt{4c_0 k^6 \lambda_{k-1}}. $$
But, by definition of $\rho^*$ in equation \eqref{eq:rhosdef}), we must have $\rho^*=\lambda_k/10$.
Therefore, by \eqref{eq:rholowerbound}, 
$$ \phi(S) \geq \frac{\lambda_k}{140k^2} =\phiin,$$
which is a contradiction, and we are done.
\end{proof}

It remains to show that the algorithm actually terminates and if $G$ is unweighted it terminates in polynomial time.
\begin{claim}
For any graph $G$ the algorithm terminates in finite number of iterations of the loop.
Furthermore, if $G$ is unweighted, the algorithm terminates after at most $O(kn\cdot |E|)$ iterations of the loop.
\end{claim}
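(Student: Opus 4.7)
The plan is to exhibit a lexicographic monovariant that strictly decreases in every iteration of the main loop, giving finite termination on any graph; a counting argument over the possible integer values of that monovariant then yields the $O(kn|E|)$ bound in the unweighted case. Define
\[
\Phi \;:=\; \bigl(-\ell,\; \sigma,\; W\bigr), \qquad \sigma := \sum_{i=1}^{\ell} |B_i|, \qquad W := \sum_{1\leq i<j\leq \ell}\e{P_i}{P_j},
\]
and compare values of $\Phi$ lexicographically, with $-\ell$ as the most significant coordinate.

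Next I would verify case by case that each branch of the loop body strictly decreases $\Phi$. Steps \ref{step:addBlB} and \ref{step:addBlP} increment $\ell$, so $-\ell$ strictly decreases and the other coordinates are irrelevant. Step \ref{step:changeTi} replaces $B_i$ by a proper non-empty subset ($S_B$ or $\bS_B$), leaving the partition $P_1,\dots,P_\ell$ (and hence $\ell$ and $W$) unchanged while making $\sigma$ drop by at least one. The migration steps \ref{step:ifPimTi} and \ref{step:ifPSPi} move a block $T\subseteq P_i\setminus B_i$ from $P_i$ to a neighbour $P_j$ satisfying $\e{T}{P_i}<\e{T}{P_j}$; the backbones $B_1,\dots,B_\ell$ are not touched, and a short bookkeeping of which edges incident to $T$ transition between ``internal'' and ``cross'' status shows that $W$ changes by exactly $\e{T}{P_i}-\e{T}{P_j}<0$. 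Hence in this case $\ell$ and $\sigma$ are unchanged and $W$ strictly decreases.

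Given the strict decrease, finite termination on any (possibly weighted) graph follows because the underlying configuration $(\ell, B_1,\dots,B_\ell, P_1,\dots,P_\ell)$ takes values in a finite set: by Claim \ref{cl:ellupperbound} we have $\ell<k$, each $B_i$ is a subset of $V$, and each $P_i$ is a block of a partition of $V$. A strictly decreasing sequence on a finite set must be finite. For the unweighted case, each coordinate of $\Phi$ is integer-valued and bounded: $-\ell \in \{-(k{-}1),\dots,-1\}$, $\sigma \in \{0,1,\dots,n\}$, and $W \in \{0,1,\dots,|E|\}$. Hence $\Phi$ assumes at most $k(n+1)(|E|+1)=O(kn|E|)$ distinct values, which bounds the number of iterations of the while loop.

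The only slightly delicate point is to verify that whenever step \ref{step:changeTi} is reached, both $S_B$ and $\bS_B$ are non-empty proper subsets of $B_i$ so that $\sigma$ genuinely drops rather than staying fixed; this is ensured by the preceding case tests (in particular step \ref{step:ifmaxSB} and the implicit assumption that $S$ is a non-trivial partition of $P_i$ supplied by the loop guard), which peel off the degenerate situations $S_B=\emptyset$ or $S_B=B_i$. With that sanity check in place, the lexicographic monovariant argument goes through cleanly.
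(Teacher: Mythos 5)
Your proof is correct and follows essentially the same route as the paper: the paper counts, branch by branch, that the new-cluster steps \ref{step:addBlB} and \ref{step:addBlP} fire at most $k-1$ times, that step \ref{step:changeTi} strictly shrinks some $B_i$, and that the migration steps \ref{step:ifPimTi} and \ref{step:ifPSPi} strictly decrease the total cross weight $\sum_{i<j}\e{P_i}{P_j}$ (an integer at most $|E|$ in the unweighted case) — precisely the three coordinates of your lexicographic potential. Your packaging as $(-\ell,\sigma,W)$ is if anything slightly tidier, since it automatically absorbs the fact that $\sigma$ can grow when step \ref{step:addBlP} creates a new $B_{\ell+1}$, a point the paper glosses over when asserting that step \ref{step:changeTi} fires at most $n$ times.
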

\begin{proof}
In each iteration  of the loop at least one of  conditions in lines \ref{step:ifmaxSB},\ref{step:ifchangeTi},\ref{step:ifmaxSP},\ref{step:ifPimTi} and \ref{step:ifPSPi} are satisfied. 
By \autoref{cl:ellupperbound}, Lines \ref{step:ifmaxSB} and \ref{step:ifmaxSP} can be satisfied at most $k-1$ times. Line \ref{step:ifchangeTi} can be satisfied at most $n$ times (this is because each time the size of  one $B_i$ decreases by at least one vertex). Furthermore, for a fixed $B_1,\ldots,B_k$, \ref{step:ifPimTi},\ref{step:ifPSPi} may hold
only finite number of iterations, because each time the total weight of the edges between $P_1,\ldots,P_k$ decreases. 
In particular, if $G$ is unweighted, the latter can happen at most $O(|E|)$ times. So, for undirected graphs  the algorithm terminates after at most $O(kn\cdot |E|)$ iterations of the loop.
\end{proof}

\noindent This completes the proof of \autoref{thm:expanderalgorithm}.

\section{Concluding Remarks}
We propose a new model for measuring the quality of $k$-partitionings of graphs which involves
both  the inside and the outside conductance of the sets in the partitioning.
We believe that this is often an accurate model of the quality of solutions in practical applications.
Furthermore, the simple local search \autoref{alg:kexpander}
can be used as a pruning step at the end of any graph clustering algorithm.

From a theoretical point of view, there has been a long line of works on the sparsest cut problem and partitioning of a graph into sets of small outside conductance \cite{LR99,LLR95,ARV04,ALN05} but none of these
works study the inside conductance of the sets in the partitioning. We think it is a fascinating open problem to study efficient algorithms based on linear programming or semidefinite programming relaxations that provide a bicriteria approximation to the $(\phiin,\phiout)$-clustering problem.

Several of our results can be generalized or improved. In \autoref{thm:kexpander} we significantly improve \autoref{thm:tanaka} of Tanaka \cite{Tan12} and we show that
even if there is a small gap between $\rho(k)$ and $\rho(k+1)$, for some $k\geq 1$, then the graph admits a $k$-partitioning that is a $(\poly(k)\rho(k+1),\poly(k)\rho(k))$-clustering. Unfortunately, to carry-out this result to the domain of eigenvalues we need to look for a significantly larger gap between $\lambda_k,\lambda_{k+1}$ (see \autoref{cor:kexpander}). It remains an open problem if  such a partitionings of $G$ exists under only a constant gap between $\lambda_k,\lambda_{k+1}$.

\subsection*{Acknowledgements}
We would like to thank anonymous reviewers for helpful comments.
Also, we would like to thank Pavel Kolev for a careful reading of the paper and exclusive comments.
\bibliographystyle{alpha}
\bibliography{references}

\end{document}